\documentclass{SciPost}

\hypersetup{
    colorlinks,
    linkcolor={red!50!black},
    citecolor={blue!50!black},
    urlcolor={blue!80!black}
}

\usepackage[bitstream-charter]{mathdesign}
\usepackage{amsthm}
\usepackage{bm}
\DeclareSymbolFont{usualmathcal}{OMS}{cmsy}{m}{n}
\DeclareMathAlphabet\mathbfcal{OMS}{cmsy}{b}{n}
\DeclareSymbolFontAlphabet{\mathcal}{usualmathcal}

\fancypagestyle{SPstyle}{
\fancyhf{}
\lhead{\colorbox{scipostblue}{\bf \color{white} ~SciPost Physics Core}}
\rhead{{\bf \color{scipostdeepblue} ~Submission }}

\fancyfoot[C]{\textbf{\thepage}}
}

 \usepackage{hyperref}

\renewcommand{\Re}{\operatorname{Re}}
\renewcommand{\Im}{\operatorname{Im}}

\newtheorem{theorem}{Theorem}
\newtheorem{lemma}{Lemma}
\newtheorem{corollary}{Corollary}
\newtheorem{definition}{Definition}
\newtheorem{proposition}{Proposition}

\begin{document}

\pagestyle{SPstyle}

\begin{center}{\Large \textbf{\color{scipostdeepblue}{
On truncations of hierarchical equations of motion for finite-dimensional systems\\
}}}\end{center}

\begin{center}\textbf{
Vasilii Vadimov\textsuperscript{1 $\star$}
}\end{center}

\begin{center}
{\bf 1}
QCD Labs, QTF Centre of Excellence, Department of Applied Physics, Aalto University, P.O. Box 15100, FI-00076 Aalto, Finland\\
$\star$ \href{mailto:vasilii.1.vadimov@aalto.fi}{\small vasilii.1.vadimov@aalto.fi}
\end{center}

\section*{\color{scipostdeepblue}{Abstract}}
\textbf{
    We study truncations of hierarchical equations of motion (HEOM) for finite-dimensional open quantum systems. We prove that for finite-dimensional approximations constructed with a Schur-complement-type terminator, the spectrum converges to that of the full HEOM as the truncation depth increases. We also prove that this approximation is free of spectral pollution: sufficiently deep truncations do not produce spurious unstable modes, provided the exact HEOM is stable. We illustrate the results for the spin-boson model.
}

\vspace{\baselineskip}

\noindent\textcolor{white!90!black}{%
\fbox{\parbox{0.975\linewidth}{%
\textcolor{white!40!black}{\begin{tabular}{lr}%
  \begin{minipage}{0.6\textwidth}%
    {\small Copyright attribution to authors. \newline
    This work is a submission to SciPost Physics Core. \newline
    License information to appear upon publication. \newline
    Publication information to appear upon publication.}
  \end{minipage} & \begin{minipage}{0.4\textwidth}
    {\small Received Date \newline Accepted Date \newline Published Date}%
  \end{minipage}
\end{tabular}}
}}
}


\vspace{10pt}
\noindent\rule{\textwidth}{1pt}
\tableofcontents
\noindent\rule{\textwidth}{1pt}
\vspace{10pt}

\section{Introduction}

Hierarchical equations of motion (HEOM)~\cite{Tanimura1989, Tanimura2020} are a well-established and widely used method for non-perturbative numerical analysis of open quantum system dynamics. The method has a broad range of applications~\cite{Tanimura2020}, including quantum chemistry~\cite{Bai2024}, quantum transport~\cite{Yang2025}, correlated electronic systems~\cite{Ye2016, Dan2023}, and other areas. Due to its conceptual simplicity and versatility, HEOM has become a standard tool in computational quantum physics and several open-source implementations are currently available online~\cite{Struempfer2012, Lambert2023, Huang2023, Zhang2024}.

In the HEOM formalism, non-Markovian features of the environment are resolved by introducing an infinite hierarchy of auxiliary density operators (ADOs)~\cite{Tanimura2020, Xu2023, Xu2026a} in addition to the physical reduced density operator of the system. This raises a number of questions about the mathematical structure of HEOM: whether the dynamics preserves complete positivity~\cite{Witt2017}, whether it is sensitive to the particular rational approximation of the bath correlation function~\cite{Xu2022, Huang2024}, and which metric is natural in the ADO hierarchy space~\cite{Chen2024a}. Recent work~\cite{Mueller2026} has shown that HEOM is equivalent to a pseudomode Lindblad master equation, provided that the rational approximation of the bath correlation function preserves its non-negativity. This outstanding result addresses two of the above issues: the resulting Lindbladian dynamics is completely positive and contractive with respect to the trace norm of the density operator.

The infinite hierarchy creates a practical problem: numerical calculations require a finite truncation. Naive truncations can suffer from spectral pollution and become unstable~\cite{Shi2018, Dunn2019, Yan2020a, Li2022, Krug2023}. Several approaches have been proposed to mitigate this issue, including projection techniques~\cite{Dunn2019} and continuous-variable representations of the hierarchy~\cite{Yan2020a, Li2022, Chen2024a}.

In Ref.~\cite{Krug2023}, the authors focused specifically on truncation stability and examined different truncation schemes and coupling strengths. They concluded that ``truncating the hierarchy to any finite size can be problematic for strong coupling to a dissipative reservoir, in particular when combined with an appreciable reservoir memory time''. In this work, we would like to follow up that study and explore stability of truncated HEOM rigorously. We prove that, provided that the truncated finite-dimensional Liouvillian is properly constructed, its spectrum converges to that of the exact HEOM as the truncation size increases. As a consequence, if the exact HEOM is itself free of intrinsic instabilities, sufficiently deep truncations are also stable. We do \emph{not}, however, address the question of HEOM stability as such.

The paper is organized as follows. In Sec.~\ref{sec:HEOM}, we introduce the HEOM formalism and fix the notations used throughout the paper. In Sec.~\ref{sec:spectral}, we study spectral properties of the full HEOM Liouvillian. We use the fact that the inter-hierarchy couplings grow slower than the diagonal dissipation terms and derive bounds on the Liouvillian resolvent inspired by Gershgorin's theorem~\cite{Gershgorin1931, Varga2011}. We then use a Schur-complement reduction~\cite{Griesemer2008} with respect to the tail of the hierarchy to analyze the resolvent inside the corresponding Gershgorin sets. In Sec.~\ref{sec:truncation}, we present the main results of the paper: we introduce a finite-dimensional approximation to the HEOM Liouvillian and prove two main theorems, on spectral convergence of the approximation and on its stability. Finally, in Sec.~\ref{sec:discussion}, we discuss the scope and limitations of the results and present an illustrative example, followed by conclusions in Sec.~\ref{sec:conclusions}.

To improve readability, most proofs of lemmas and theorems are presented in the appendices. We keep, however, the proof of Theorem~\ref{th:spectrum-convergence} in the main text, since it constitutes one of the two central results of the paper.

\section{HEOM formulation}
\label{sec:HEOM}

HEOM is formulated for a hierarchy of ADOs \(\hat \rho_{\bm n}\), where \(\bm n \in \mathbb N_0^N\) is a multiindex of \(N\) non-negative integers. The top of the hierarchy, \(\hat \rho_{\bm 0}\), coincides with the reduced density operator of the system. The direct physical meaning of the higher ADOs is less transparent, but they encode system--reservoir correlations. HEOM is linear and can be written as
\begin{equation}
    \frac{\mathrm d \hat \rho_{\bm n}}{\mathrm d t}
    =
    \sum\limits_{\bm m} \mathcal L_{\bm n \bm m} \hat \rho_{\bm m},
    \label{eq:HEOM0}
\end{equation}
where each Liouvillian component \(\mathcal L_{\bm n \bm m}\) is a linear superoperator acting on trace-class density operators. We restrict ourselves to the finite-dimensional systems, therefore each of these maps is bounded.

We use a symmetrically scaled version of HEOM~\cite{Shi2009, Zhu2011}, the structure of its Liouvillian $\mathbfcal L$ reads as (\(\hbar = 1\))
\begin{equation}
    \begin{aligned}
        \mathcal L_{\bm n \bm n}\hat \rho
        &=
        -i \left[\hat H, \hat \rho\right]
        - \sum\limits_{j=1}^N n_j \gamma_j \hat \rho, \\
        \mathcal L_{\bm n,\, \bm n + \bm e_j}\hat \rho
        &=
        c_j \sqrt{n_j + 1}\, \left[\hat q_j, \hat \rho\right],~j = 1,\ldots, N, \\
        \mathcal L_{\bm n,\, \bm n - \bm e_j}\hat \rho
        &=
        \sqrt{n_j} \left(
        c_j' \hat q_j \hat \rho + c_j'' \hat \rho \hat q_j
        \right),~j = 1,\ldots, N,
    \end{aligned}
    \label{eq:HEOM}
\end{equation}
with all the other entries of the Liouvillian vanishing.
Here, \(\hat H\) is the Hamiltonian of the system, \(\gamma_j\) are complex-valued decay rates with positive real parts, \(\bm e_j\) is the multiindex whose entries are zero except for the \(j\)-th entry, which is equal to \(1\), \(\hat q_j\) are coupling operators, and \(c_j\), \(c_j'\), and \(c_j''\) are complex-valued coupling constants, with \(j = 1, \ldots, N\). The complex-valued decay rates \(\gamma_j\) are determined by the poles of the bath correlation functions, while the coupling constants \(c_j\), \(c_j'\), and \(c_j''\) originate from the corresponding residues. We do not restrict ourselves to any particular choice of system or reservoir parameters and work with the general form of HEOM.

One can associate the multiindex \(\bm n\) with the occupation numbers of \(N\) auxiliary bosonic modes~\cite{Xu2023, Li2024, Chen2024a, Vadimov2025, Xu2026a}. In this picture, the coupling between hierarchy levels is similar to the action of bosonic ladder operators.

\subsection{Notations and norm choice}

A natural choice of norm for density operators is the trace norm,
\begin{equation}
    \|\hat \rho \| := \operatorname{Tr} \sqrt{\hat \rho^\dagger \hat \rho}.
\end{equation}
For operators in the Hilbert space of the system, we also use the standard spectral norm,
\begin{equation}
    \left \| \hat A \right \|
    :=
    \sup\limits_{\langle \psi | \psi \rangle = 1}
    \sqrt{\langle \psi | \hat A^\dagger \hat A | \psi \rangle }.
\end{equation}
It is equal to the largest singular value of \(\hat A\). For Hermitian operators, this norm coincides with the largest absolute value of the eigenvalues.

We denote superoperators acting on density operators by calligraphic capital letters, such as \(\mathcal A\). For them, we use the norm induced by the trace norm:
\begin{equation}
    \| \mathcal A \|
    :=
    \sup\limits_{\| \hat \rho \| = 1} \| \mathcal A \hat \rho\|.
\end{equation}

We denote a hierarchy of ADOs by \(\hat{\bm \rho}\). Its components \(\hat \rho_{\bm n}\) are indexed by elements of \(\mathbb N_0^N\) or by elements of its subsets. 
Our work is based on a generalization of Gershgorin's theorem~\cite{Gershgorin1931} and for ordinary matrices it can be conveniently formulated and proved in terms of $1$-norm and $\infty$-norm.
Therefore, we use the \(1\)-norm
\begin{equation}
    \| \hat{\bm \rho} \|_1
    :=
    \sum\limits_{\bm n} \| \hat \rho_{\bm n} \|,
\end{equation}
since it is more restrictive: if \(\|\hat {\bm \rho} \|_1 < +\infty\), then \(\| \hat{\bm \rho} \|_\infty < +\infty\), while the converse does not hold.

For linear maps \(\mathbfcal A\) acting between hierarchies of ADOs, we use the corresponding induced norm
\begin{align}
    \| \mathbfcal A \|_1
    &:=
    \sup \limits_{\| \hat {\bm \rho} \|_1 = 1}
    \sum_{\bm n} \left \| \sum\limits_{\bm m} \mathcal A_{\bm n \bm m} \hat \rho_{\bm m} \right \|
    \leqslant
    \sup_{\bm m} \sum\limits_{\bm n} \| \mathcal A_{\bm n \bm m} \|.
\end{align}
As one can see, the HEOM Liouvillian is unbounded.

Throughout the paper we use notations $\sigma(\mathcal A)$ and $\varrho(\mathcal A)$ for the spectrum and resolvent sets of the operator $\mathcal A$, respectively.

\section{Spectral properties of HEOM Liouvillian}

\label{sec:spectral}

In order to study the spectral properties of the Liouvillian, we analyze its resolvent
\begin{equation}
    \mathbfcal R(z;\mathbfcal L) := (z - \mathbfcal L)^{-1}.
\end{equation}
From the structure of the Liouvillian blocks~\eqref{eq:HEOM}, we see that the diagonal dissipation grows linearly with the hierarchy indices $n_j$, while the off-diagonal couplings grow only as $\sqrt{n_j}$. As a result, for certain \(z\), the infinite matrix \(z - \mathbfcal L\) becomes block-diagonally dominant with respect to the hierarchy multiindex. This allows us to identify a domain $\Omega(\mathbfcal L) \subset \mathbb C$ where the resolvent exists and to derive rigorous bounds on it.

To analyze the resolvent in the whole complex plane \(\mathbb C\), we split the hierarchy into a finite low-lying part and an infinite deep tail. We then use the obtained results for the resolvent of the tail block and study the spectrum of the Liouvillian through the Schur complement of~\(z - \mathbfcal L\).

\subsection{Gershgorin-type resolvent bound}

 
We generalize certain results by Gershgorin~\cite{Gershgorin1931} and subsequent works~\cite{Feingold1962, Varah1975, Salas1999, Varga2011} on the spectrum and resolvent of diagonally dominant matrices. Since we work with the \(1\)-norm, it is natural to employ column diagonal dominance.

\begin{definition}
    For a Liouvillian $\mathbfcal L$ and a multiindex $\bm n$ let
    \begin{equation}
        \beta(z; \mathcal L_{\bm{n n}})
        :=
        \inf_{\|\hat \rho\| = 1}
        \left\|
        (z - \mathcal L_{\bm n\bm n})\hat \rho
        \right\|,
    \end{equation}
    We define the \emph{column Gershgorin set} $G_{\bm n}(\mathbfcal L)$ as
    \begin{equation}
        G_{\bm n}(\mathbfcal L)
        :=
        \left\{
        z \in \mathbb C :
            \beta(z; \mathcal L_{\bm{n n}}) \leqslant R_{\bm n}(\mathbfcal L)
        \right\},
    \end{equation}
    where $R_{\bm n}(\mathbfcal L)
    :=
    \sum_{\bm m \ne \bm n}
    \|\mathcal L_{\bm m\bm n}\|$ is the \emph{Gershgorin radius}.
    \label{def:gershgorin-set}
\end{definition}
These Gershgorin sets generalize the Gershgorin circles~\cite{Gershgorin1931} to the case of operator-valued matrices~\cite{Salas1999}. We are now ready to formulate a theorem giving a resolvent bound.
\begin{theorem}
    Let
    $
    G(\mathbfcal L)
    :=
    \overline{\bigcup_{\bm n} G_{\bm n}(\mathbfcal L)}
    $
    be a closed union of all the Gershgorin sets and 
    $
    \Omega(\mathbfcal L)
    :=
    \mathbb C \setminus G(\mathbfcal L)
    $
    be its complement.
    Assume that for a given \(z \in \Omega(\mathbfcal L)\),
\begin{equation}
    q(z;\mathbfcal L)
    :=
    \sup_{\bm n}
    \frac{R_{\bm n}(\mathbfcal L)}{\beta(z; \mathcal L_{\bm{n n}})}
    < 1.
    \label{eq:qz-condition}
\end{equation}
Then \(z \in \varrho(\mathbfcal L)\), and
\begin{equation}
    \left\|
    \mathbfcal R(z;\mathbfcal L)
    \right\|_1
    \leqslant
    \sup_{\bm n}
    \left[
        \beta(z; \mathcal L_{\bm{nn}}) - R_{\bm n}(\mathbfcal L)
    \right]^{-1}.
    \label{eq:resolvent-bound-hard-clean}
\end{equation}
\label{th:gershgorin-hard}
\end{theorem}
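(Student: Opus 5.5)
Write $z - \mathbfcal L = \mathbfcal D - \mathbfcal K$, where $\mathbfcal D$ is block-diagonal with blocks $\mathcal D_{\bm n} := z - \mathcal L_{\bm n\bm n}$, and $\mathbfcal K$ collects the off-diagonal blocks $\mathcal K_{\bm m\bm n} := \mathcal L_{\bm m\bm n}$ for $\bm m\neq\bm n$. The operator $\mathbfcal L$ is unbounded, so we regard it as defined on the natural domain of hierarchies $\hat{\bm\rho}$ with $\|\mathbfcal L\hat{\bm\rho}\|_1<\infty$, or equivalently as the closure from finitely supported hierarchies. The plan is to prove a lower bound on $z-\mathbfcal L$, then surjectivity via a Neumann series, and then read off the resolvent bound.

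\emph{Step 1 (diagonal blocks).} Since $z\in\Omega(\mathbfcal L)$, we have $\beta(z;\mathcal L_{\bm n\bm n})>R_{\bm n}\ge 0$ for every $\bm n$. Because the system is finite-dimensional, $\beta>0$ means $\mathcal D_{\bm n}$ is injective and hence invertible, with $\|\mathcal D_{\bm n}^{-1}\| = 1/\beta(z;\mathcal L_{\bm n\bm n})$. Hence $\mathbfcal D^{-1}$ exists blockwise and is bounded in $\|\cdot\|_1$ provided $\beta$ is bounded below uniformly in $\bm n$. This holds because $\beta_{\bm n}\ge R_{\bm n}/q$ if $R_{\bm n}>0$, and in general $\beta_{\bm n}\gtrsim \Re\sum_j n_j\gamma_j$ grows for large $\bm n$.

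\emph{Step 2 (a priori lower bound).} For finitely supported $\hat{\bm\rho}$, write $\hat{\bm\sigma} = (z-\mathbfcal L)\hat{\bm\rho}$. Then
\[
\|\hat{\bm\sigma}\|_1 \ge \sum_{\bm n}\Bigl(\|\mathcal D_{\bm n}\hat\rho_{\bm n}\| - \sum_{\bm m\ne\bm n}\|\mathcal L_{\bm n\bm m}\|\,\|\hat\rho_{\bm m}\|\Bigr) \ge \sum_{\bm n}\bigl(\beta_{\bm n}-R_{\bm n}\bigr)\|\hat\rho_{\bm n}\|,
\]
where the second inequality exchanges the order of summation, which is exactly the column-dominance step. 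This gives $\|\hat{\bm\rho}\|_1\le \sup_{\bm n}(\beta_{\bm n}-R_{\bm n})^{-1}\|\hat{\bm\sigma}\|_1$. The bound extends to the closure, so $z-\mathbfcal L$ is injective with closed range, and it already yields \eqref{eq:resolvent-bound-hard-clean} once invertibility is known. The supremum is finite because $\beta_{\bm n}-R_{\bm n}\ge(1-q)\beta_{\bm n}$ and $\beta_{\bm n}$ is bounded below, as in Step 1.

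\emph{Step 3 (surjectivity).} Factor $z-\mathbfcal L = (\mathbfcal I - \mathbfcal K\mathbfcal D^{-1})\mathbfcal D$. We have
\[
\|\mathbfcal K\mathbfcal D^{-1}\|_1 \le \sup_{\bm n}\sum_{\bm m\ne\bm n}\|\mathcal L_{\bm m\bm n}\|\,\|\mathcal D_{\bm n}^{-1}\| = \sup_{\bm n} R_{\bm n}/\beta_{\bm n} = q<1,
\]
so the Neumann series gives a bounded inverse of $\mathbfcal I-\mathbfcal K\mathbfcal D^{-1}$. Setting $\hat{\bm\rho} := \mathbfcal D^{-1}(\mathbfcal I-\mathbfcal K\mathbfcal D^{-1})^{-1}\hat{\bm\sigma}$ then solves $(z-\mathbfcal L)\hat{\bm\rho}=\hat{\bm\sigma}$ for every $\hat{\bm\sigma}$ in $\ell^1$. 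Combined with Step 2, this shows $z\in\varrho(\mathbfcal L)$ with the stated bound.

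The main obstacle is the domain issue in Step 3: we must confirm that $\hat{\bm\rho}$ lies in the domain of $\mathbfcal L$ and that the factorization is legitimate for an unbounded operator. I would handle this by noting that $\mathbfcal D\hat{\bm\rho} = (\mathbfcal I-\mathbfcal K\mathbfcal D^{-1})^{-1}\hat{\bm\sigma}\in\ell^1$, and that $\mathbfcal K\hat{\bm\rho}=\mathbfcal K\mathbfcal D^{-1}(\mathbfcal D\hat{\bm\rho})\in\ell^1$. Hence $\mathbfcal L\hat{\bm\rho}\in\ell^1$ and the blockwise equation holds componentwise. Each component involves only finitely many neighbouring blocks, so no convergence subtleties arise in the matrix products.
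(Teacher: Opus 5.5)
Your proof is correct and follows the paper's argument essentially step for step: the same split $z-\mathbfcal L=\mathbfcal D-\mathbfcal L''$, the same Neumann-series inversion of $\mathbfcal I-\mathbfcal L''\mathbfcal D^{-1}$ with norm at most $q(z;\mathbfcal L)$ obtained from column sums, and the same triangle-inequality lower bound $\|(z-\mathbfcal L)\hat{\bm\rho}\|_1\geqslant\inf_{\bm n}(\beta_{\bm n}-R_{\bm n})\|\hat{\bm\rho}\|_1$ for the sharp estimate (you only reverse the order of the two parts). You are in fact slightly more careful than the paper, which writes $\|\mathbfcal D(z)^{-1}\|_1=\sup_{\bm n}\beta(z;\mathcal L_{\bm{nn}})^{-1}$ without checking finiteness and does not verify that the constructed solution lies in the domain; your point that $\inf_{\bm n}\beta(z;\mathcal L_{\bm{nn}})>0$ must come from the HEOM diagonal structure (Lemma~\ref{lemma:inv-diagonal} plus finiteness of $\{\bm n:\Re\gamma_{\bm n}\leqslant c\}$), and not from $q<1$ alone, makes explicit an assumption the paper leaves implicit.
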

Condition~\eqref{eq:qz-condition} does not follow automatically from $z \in \Omega(\mathbfcal L)$. In general, there may be a situation when $\beta(z; \mathcal L_{\bm{nn}}) \leqslant R_{\bm n}(\mathbfcal L) - \delta$ for some positive $\delta$ but $q(z; \mathbfcal L) = 1$. The requirement~\eqref{eq:qz-condition} is needed to prove existence of the resolvent.

We now adapt this theorem to the HEOM Liouvillian. We use the specific form of the Liouvillian~\eqref{eq:HEOM} to provide bounds for $\beta(z; \mathcal L_{\bm{nn}})$ and Gershgorin radii $R_{\bm n}(\mathbfcal L)$.
\begin{lemma}
    Let $\hat A$ be a Hermitian matrix and $b$ be a complex number. Let $\mathcal T$ be a linear map defined as
    $$\mathcal T \hat \rho := -i \left[\hat A, \hat \rho\right] + b \hat \rho.$$ 
    Let $$S\left(b, \hat A\right) := b + i\left[-\Delta\left(\hat A\right), \Delta\left(\hat A\right)\right], $$ where 
    $$\Delta\left(\hat A\right) := \max_{\lambda \in \sigma(\hat A)} \lambda - \min_{\lambda \in \sigma(\hat A)} \lambda.$$
    Then $\sigma(\mathcal T) \subset S\left(b, \hat A\right)$, and for $z \notin S(b, \hat A)$ the following bound holds
    \begin{equation}
        \beta(z; \mathcal T) \geqslant \operatorname{dist}\left(z, S\left(b, \hat A\right)\right).
    \end{equation}
    \label{lemma:inv-diagonal}
\end{lemma}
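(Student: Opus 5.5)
The plan is to reduce the bound on $\beta(z;\mathcal T)$ to a one-parameter family of semigroup estimates, one for each direction in the complex plane. The key input is that conjugation by unitaries and left/right multiplication by normal operators have trace-norm bounds that can be read off from their spectra.

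\emph{Spectral inclusion.} Diagonalize $\hat A=\sum_k\lambda_k|k\rangle\langle k|$. The matrix units $|k\rangle\langle l|$ are eigenvectors of $\mathcal T$ with eigenvalues $b-i(\lambda_k-\lambda_l)$. Since $\lambda_k-\lambda_l\in[-\Delta(\hat A),\Delta(\hat A)]$, these eigenvalues lie in $S(b,\hat A)$. The space of operators is finite-dimensional and is spanned by the matrix units, so $\sigma(\mathcal T)\subset S(b,\hat A)$.

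\emph{Reduction to a directional estimate.} Set $w:=z-b$ and $I:=i[-\Delta,\Delta]$, so that $z-\mathcal T=w+i[\hat A,\cdot\,]$ and $\operatorname{dist}(z,S)=\operatorname{dist}(w,I)$. Because $I$ is a compact convex set, separation gives
\begin{equation*}
\operatorname{dist}(w,I)=\sup_{\varphi}\Big[\Re(e^{i\varphi}w)-h_\varphi\Big],
\qquad
h_\varphi:=\max_{s\in I}\Re(-e^{i\varphi}s),
\end{equation*}
where $h_\varphi$ is the support function of $-I$. It therefore suffices to show, for every real $\varphi$ with $\Re(e^{i\varphi}w)>h_\varphi$, that
\begin{equation*}
\big\|\big(w+i[\hat A,\cdot\,]\big)^{-1}\big\|\leqslant\big[\Re(e^{i\varphi}w)-h_\varphi\big]^{-1}.
\end{equation*}
Taking the supremum over $\varphi$ then yields $\beta(z;\mathcal T)\geqslant\operatorname{dist}(z,S)$, because $\beta$ is the reciprocal of the resolvent norm.

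\emph{Semigroup bound.} Multiply by $e^{i\varphi}$ and write $e^{i\varphi}(w+i[\hat A,\cdot\,])=e^{i\varphi}w-\mathcal G$, where
\begin{equation*}
\mathcal G\hat\rho=\hat X\hat\rho+\hat\rho\hat Y,\qquad
\hat X=-ie^{i\varphi}(\hat A-c),\quad \hat Y=ie^{i\varphi}(\hat A-c),
\end{equation*}
for an arbitrary real constant $c$; the constant cancels in the commutator. The generator $\mathcal G$ is a sum of commuting left and right multiplications, so $e^{t\mathcal G}\hat\rho=e^{t\hat X}\hat\rho\, e^{t\hat Y}$. By Hölder's inequality for the trace norm, $\|e^{t\mathcal G}\hat\rho\|\leqslant\|e^{t\hat X}\|\,\|\hat\rho\|\,\|e^{t\hat Y}\|$. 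Both $\hat X$ and $\hat Y$ are normal, hence
\begin{equation*}
\|e^{t\hat X}\|\,\|e^{t\hat Y}\|=\exp\!\Big(t\big[\max\Re\sigma(\hat X)+\max\Re\sigma(\hat Y)\big]\Big).
\end{equation*}
The exponent is $t\max_{\lambda,\mu\in\sigma(\hat A)}\Re\!\big(-ie^{i\varphi}(\lambda-\mu)\big)$, and this is at most $t\,h_\varphi$ since $i(\lambda-\mu)\in I$. The Laplace transform representation
\begin{equation*}
(e^{i\varphi}w-\mathcal G)^{-1}=\int_0^\infty e^{-te^{i\varphi}w}e^{t\mathcal G}\,\mathrm dt
\end{equation*}
then gives the desired bound $[\Re(e^{i\varphi}w)-h_\varphi]^{-1}$.

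\emph{Main obstacle.} The delicate point is obtaining the full distance to the segment, not just $|\Re w|$. A single semigroup argument only controls the real direction. This is why the plan rotates by $e^{i\varphi}$ and uses the fact that the left and right spectral maxima decouple, giving exactly the support function of the segment of length $2\Delta$.

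I would also double-check the sign conventions to make sure the support function of $I$ and not of $-I$ appears. Since $I$ is symmetric under $s\mapsto -s$, either convention gives the same $h_\varphi$, so the conclusion is unaffected.
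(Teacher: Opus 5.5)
Your proposal is correct and follows essentially the same route as the paper. Both arguments use a rotated Laplace-transform representation of $(z-\mathcal T)^{-1}$, a trace-norm H\"older bound showing that the rotated semigroup grows at rate $\Delta(\hat A)|\sin\varphi|$, and an optimization over $\varphi$. The only difference is the last step: you identify $\sup_\varphi\bigl[\Re(e^{i\varphi}(z-b))-\Delta(\hat A)|\sin\varphi|\bigr]$ with $\operatorname{dist}(z,S(b,\hat A))$ through the support-function formula for the distance to a compact convex set, while the paper computes this maximum explicitly by cases on $\Im(z-b)$.
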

\begin{lemma}
    For a Liouvillian $\mathbfcal L$ given by~\eqref{eq:HEOM}, Gershgorin radii are bounded by
    \begin{equation}
        R_{\bm n}(\mathbfcal L)
        \leqslant
        \sqrt{C(\Re \gamma_{\bm n} + \gamma)},
        \label{eq:bound-off-diag}
    \end{equation}
    where
    \begin{equation}
        \begin{gathered}
            C := \sum\limits_{j=1}^N 
            \frac{\left \| \hat q_j \right \|^2}{
            \Re \gamma_j
            }
            \left(
            2 |c_j| + |c_j'| + |c_j''|
            \right) ^ 2,~
            \gamma_{\bm n} := \sum\limits_{j=1}^N n_j \gamma_j,~
            \gamma := \sum\limits_{j=1}^N \Re \gamma_j.
        \end{gathered}
    \end{equation}
    \label{lemma:gershgorin-radii}
\end{lemma}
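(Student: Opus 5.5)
We compute the Gershgorin radius column by column, then apply Cauchy--Schwarz. By Definition~\ref{def:gershgorin-set}, $R_{\bm n}(\mathbfcal L)=\sum_{\bm m\neq\bm n}\|\mathcal L_{\bm m\bm n}\|$, a sum over the column indexed by $\bm n$. By~\eqref{eq:HEOM}, the only nonzero off-diagonal blocks in that column are $\mathcal L_{\bm n-\bm e_j,\bm n}$ and $\mathcal L_{\bm n+\bm e_j,\bm n}$.

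\textbf{Identifying and bounding the blocks.} The block $\mathcal L_{\bm n-\bm e_j,\bm n}$ is the upward coupling $\mathcal L_{\bm m,\bm m+\bm e_j}$ with $\bm m=\bm n-\bm e_j$. It equals $c_j\sqrt{n_j}\,[\hat q_j,\cdot]$ and is present only when $n_j\geqslant1$. The block $\mathcal L_{\bm n+\bm e_j,\bm n}$ is the downward coupling $\mathcal L_{\bm m,\bm m-\bm e_j}$ with $\bm m=\bm n+\bm e_j$. It equals $\sqrt{n_j+1}\,(c_j'\hat q_j\,\cdot+c_j''\,\cdot\,\hat q_j)$.

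Both are bounded in the trace-induced norm by Hölder's inequality, $\|\hat A\hat\rho\hat B\|\leqslant\|\hat A\|\,\|\hat\rho\|\,\|\hat B\|$. This gives
\begin{equation*}
\|[\hat q_j,\cdot]\|\leqslant 2\|\hat q_j\|, \qquad \|c_j'\hat q_j\,\cdot+c_j''\,\cdot\,\hat q_j\|\leqslant(|c_j'|+|c_j''|)\|\hat q_j\|.
\end{equation*}
Using $\sqrt{n_j}\leqslant\sqrt{n_j+1}$, we obtain
\begin{equation*}
R_{\bm n}(\mathbfcal L)\leqslant\sum_{j=1}^N\sqrt{n_j+1}\,\|\hat q_j\|\,\bigl(2|c_j|+|c_j'|+|c_j''|\bigr).
\end{equation*}

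\textbf{Cauchy--Schwarz with weights $\Re\gamma_j$.} Write each summand as
\begin{equation*}
\frac{\|\hat q_j\|(2|c_j|+|c_j'|+|c_j''|)}{\sqrt{\Re\gamma_j}}\cdot\sqrt{(n_j+1)\Re\gamma_j}.
\end{equation*}
Cauchy--Schwarz over $j$ then bounds the sum by
\begin{equation*}
\sqrt{C}\,\Bigl(\sum_j(n_j+1)\Re\gamma_j\Bigr)^{1/2}.
\end{equation*}
Since $\sum_j n_j\Re\gamma_j=\Re\gamma_{\bm n}$ and $\sum_j\Re\gamma_j=\gamma$, this is exactly $\sqrt{C(\Re\gamma_{\bm n}+\gamma)}$, which is~\eqref{eq:bound-off-diag}.

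There is no real obstacle. The points needing care are:
\begin{itemize}
\item reading the transposed (column) indexing correctly, so that the upward coupling supplies the factor $\sqrt{n_j}$ and the downward coupling the factor $\sqrt{n_j+1}$;
\item justifying the Hölder bound for the trace norm;
\item choosing the weights $\Re\gamma_j$ in Cauchy--Schwarz so that the constant $C$ and the combination $\Re\gamma_{\bm n}+\gamma$ come out exactly as stated.
\end{itemize}
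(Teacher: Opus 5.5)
Your proof is correct. Up to the intermediate estimate $R_{\bm n}(\mathbfcal L)\leqslant\sum_j a_j\sqrt{n_j+1}$, with $a_j:=\|\hat q_j\|(2|c_j|+|c_j'|+|c_j''|)$, it is the paper's argument verbatim, including the column reading that puts $\sqrt{n_j}$ on the raising block and $\sqrt{n_j+1}$ on the lowering block.

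The two routes differ only in the final optimization. The paper relaxes the $n_j$ to real numbers $n_j\geqslant -1$ and fixes $\sum_j n_j\Re\gamma_j=\Re\gamma_{\bm n}$. It then maximizes the concave function $\sum_j a_j\sqrt{n_j+1}$ by Lagrange multipliers, finding the critical point $n_j+1=\bigl(a_j/(2\lambda\Re\gamma_j)\bigr)^2$ with $\lambda=\frac{1}{2}\sqrt{C/(\Re\gamma_{\bm n}+\gamma)}$, and evaluates the function there. Your weighted Cauchy--Schwarz reaches the same bound in one line. The paper's critical point is exactly the equality case of your inequality, where $\sqrt{(n_j+1)\Re\gamma_j}$ is proportional to $a_j/\sqrt{\Re\gamma_j}$.

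Your version needs no relaxation and no appeal to concavity to certify that a critical point is a global maximum. It also needs no special care with degenerate data, which the Lagrange argument glosses over. If some $a_j=0$ while others are nonzero, the stationarity equation $a_j/(2\sqrt{n_j+1})=\lambda\Re\gamma_j$ has no solution with $\lambda>0$, and the true optimum sits on the boundary $n_j=-1$. If $C=0$, the formula for $\lambda$ degenerates. Cauchy--Schwarz covers both cases automatically.

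The only extra content of the Lagrange computation is that the estimate is sharp for the relaxed problem. That fact is equally visible from the equality case of Cauchy--Schwarz.
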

These bounds are sufficient to show that the requirement~\eqref{eq:qz-condition} is fulfilled automatically for any HEOM Liouvillian given by~\eqref{eq:HEOM}. As a corollary, the statement of Theorem~\ref{th:gershgorin-hard} can be slightly simplified.

\begin{corollary}
    If the Liouvillian $\mathbfcal L$ is given by \eqref{eq:HEOM}, then $\Omega(\mathbfcal L) \subset \varrho(\mathbfcal L)$ and the bound
    \eqref{eq:resolvent-bound-hard-clean} holds for each $z \in \Omega(\mathbfcal L)$.
    \label{cor:resolvent-set}
\end{corollary}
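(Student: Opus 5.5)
We need to show that for any $z \in \Omega(\mathbfcal L)$, condition~\eqref{eq:qz-condition} holds automatically. Theorem~\ref{th:gershgorin-hard} then gives both $z \in \varrho(\mathbfcal L)$ and the bound~\eqref{eq:resolvent-bound-hard-clean}. Since $z \in \Omega(\mathbfcal L)$, we already have $\beta(z;\mathcal L_{\bm{nn}}) > R_{\bm n}(\mathbfcal L)$ for every $\bm n$, so each ratio $R_{\bm n}/\beta$ is strictly below $1$. The only thing to exclude is that the supremum over the infinitely many $\bm n$ equals $1$. The plan is therefore: (i) show $R_{\bm n}/\beta(z;\mathcal L_{\bm{nn}}) \to 0$ as $|\bm n| \to \infty$; (ii) conclude that the supremum is attained or approached on a finite set of multiindices, where it is a maximum of finitely many numbers each $<1$.

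For step (i), apply Lemma~\ref{lemma:inv-diagonal} to $\mathcal L_{\bm{nn}}$, with $\hat A = \hat H$ and $b = -\gamma_{\bm n}$. This gives $\sigma(\mathcal L_{\bm{nn}}) \subset S(-\gamma_{\bm n},\hat H)$, a vertical segment whose real part is $-\Re\gamma_{\bm n}$. Hence
\[
\beta(z;\mathcal L_{\bm{nn}}) \geqslant \operatorname{dist}\bigl(z, S(-\gamma_{\bm n},\hat H)\bigr) \geqslant \Re\gamma_{\bm n} + \Re z
\]
whenever the right-hand side is positive. Because every $\Re\gamma_j > 0$, we have $\Re\gamma_{\bm n} \geqslant |\bm n| \min_j \Re\gamma_j \to \infty$. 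For fixed $z$, the lower bound on $\beta$ therefore grows linearly in $\Re\gamma_{\bm n}$. On the other hand, Lemma~\ref{lemma:gershgorin-radii} gives $R_{\bm n} \leqslant \sqrt{C(\Re\gamma_{\bm n}+\gamma)}$, which grows only like a square root. So
\[
\frac{R_{\bm n}}{\beta(z;\mathcal L_{\bm{nn}})} \leqslant \frac{\sqrt{C(\Re\gamma_{\bm n}+\gamma)}}{\Re\gamma_{\bm n}+\Re z} \to 0 .
\]
Choosing $\varepsilon = 1/2$, there is an $M$ such that the ratio is below $1/2$ for all $\bm n$ with $|\bm n| > M$.

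For step (ii), the set $\{\bm n : |\bm n| \leqslant M\}$ is finite. On it, each ratio is strictly less than $1$ because $z \notin G_{\bm n}(\mathbfcal L)$; the corresponding $\beta$ is positive since $\beta > R_{\bm n} \geqslant 0$. The maximum over this finite set is therefore $<1$, and $q(z;\mathbfcal L) \leqslant \max(1/2, \text{that maximum}) < 1$. The conclusion then follows directly from Theorem~\ref{th:gershgorin-hard}.

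The main obstacle is step (i), in particular the uniformity in $\bm n$. It depends on $\Re\gamma_j > 0$ for all $j$ and on applying the distance bound correctly when the segment $S$ moves left while $z$ stays fixed. A second point to watch is $\beta = 0$ for small $\bm n$: this is ruled out by membership in $\Omega$, which is why that hypothesis is needed.
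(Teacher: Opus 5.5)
Your proof is correct and takes the same route as the paper. Lemmas~\ref{lemma:inv-diagonal} and~\ref{lemma:gershgorin-radii} force $R_{\bm n}(\mathbfcal L)/\beta(z;\mathcal L_{\bm{nn}})<1/2$ outside a finite set of multiindices (finite because every $\Re\gamma_j>0$), membership in $\Omega(\mathbfcal L)$ keeps each remaining ratio strictly below $1$, so $q(z;\mathbfcal L)<1$ and Theorem~\ref{th:gershgorin-hard} applies; the paper only differs in writing the threshold explicitly, handling $\Re z>\gamma+C$ separately, where you use a limit argument, and your restriction to $\Re\gamma_{\bm n}+\Re z>0$ before invoking Lemma~\ref{lemma:inv-diagonal} is slightly more careful than the paper's use of $|\Re(z+\gamma_{\bm n})|$ as a denominator.
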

For clarity, we move the proofs of the statements presented in this Section to Appendices~\ref{sec:th-gershgorin-hard-proof}, \ref{sec:lemma-inv-diagonal-proof}, \ref{sec:lemma-gershgorin-radii-proof}, and~\ref{sec:corr-resolvent-set}. 
\newcommand{\dist}{\operatorname{dist}}
%
%
%
%
%
%
%
\subsection{Block decomposition}

\newcommand{\LI}{{\mathbb T}}
\newcommand{\LJ}{{\overline {\mathbb T}}}
\newcommand{\LII}{{\LI\LI}}
\newcommand{\LIJ}{{\LI\LJ}}
\newcommand{\LJI}{{\LJ\LI}}
\newcommand{\LJJ}{{\LJ\LJ}}

The Gershgorin-type bound~\eqref{eq:resolvent-bound-hard-clean} provides useful control of the resolvent only outside the set \(G(\mathbfcal L)\). For \(z \in G(\mathbfcal L)\), this estimate by itself gives essentially no information. To study the spectrum in this region, we combine the Gershgorin-type estimate with a Schur-complement reduction.

\begin{definition}
    Let \(\LI \subset \mathbb N_0^N\) be a finite subset of hierarchy multiindices such that \(\bm 0 \in \LI\). We call such a subset a \emph{truncation of the hierarchy} and denote its complement by
    $ \LJ := \mathbb N_0^N \setminus \LI$.
\end{definition}

For a given truncation \(\LI\), we decompose the hierarchy into the sectors supported on~\(\LI\) and~\(\LJ\), and split the Liouvillian into four corresponding blocks:
\begin{equation}
    \mathbfcal L
    =
    \begin{bmatrix}
        \mathbfcal L_\LII & \mathbfcal L_\LIJ \\
        \mathbfcal L_\LJI & \mathbfcal L_\LJJ
    \end{bmatrix}.
\end{equation}
The resolvent of the Liouvillian then has the same block structure,
\begin{equation}
    \mathbfcal R(z; \mathbfcal L)  = 
    \begin{bmatrix}
        z - \mathbfcal L_\LII & -\mathbfcal L_\LIJ \\
        -\mathbfcal L_\LJI & z - \mathbfcal L_\LJJ
    \end{bmatrix}^{-1} = 
    \begin{bmatrix}
        \mathbfcal R_\LII(z; \mathbfcal L) & \mathbfcal R_\LIJ(z; \mathbfcal L) \\
        \mathbfcal R_\LJI(z; \mathbfcal L) & \mathbfcal R_\LJJ(z; \mathbfcal L)
    \end{bmatrix}.
    \label{eq:block-resolvent}
\end{equation}
For \(z \in \Omega\left(\mathbfcal L_\LJJ\right)\), the blocks of the resolvent are given by
\begin{equation}
\begin{aligned}
    \mathbfcal R_\LII(z; \mathbfcal L) &= \left[\mathbfcal S_\LI(z)\right]^{-1}, \\
    \mathbfcal R_\LIJ(z; \mathbfcal L) &= \mathbfcal R_\LII(z; \mathbfcal L) \mathbfcal L_\LIJ \left(z - \mathbfcal L_\LJJ\right)^{-1}, \\
    \mathbfcal R_\LJI(z; \mathbfcal L) &= \left(z - \mathbfcal L_\LJJ\right)^{-1} \mathbfcal L_\LJI \mathbfcal R_\LII(z; \mathbfcal L), \\
    \mathbfcal R_\LJJ(z; \mathbfcal L) &= \left(z - \mathbfcal L_\LJJ\right)^{-1} + 
    \left(z - \mathbfcal L_\LJJ\right)^{-1}
    \mathbfcal L_\LJI \mathbfcal R_\LII(z; \mathbfcal L) \mathbfcal L_\LIJ
    \left(z - \mathbfcal L_\LJJ\right)^{-1},
\end{aligned}
    \label{eq:resolvent-blocks}
\end{equation}
where
\begin{equation}
    \mathbfcal S_\LI(z) := z - \mathbfcal L_\LII - \mathbfcal L_\LIJ
    \left(z - \mathbfcal L_\LJJ\right)^{-1} \mathbfcal L_\LJI
\end{equation}
is the Schur complement of \(z-\mathbfcal L\) with respect to the block \(z-\mathbfcal L_\LJJ\). Since \(\LI\) is finite, \(\mathbfcal S_\LI(z)\) is a matrix-valued holomorphic function in \(\Omega\left(\mathbfcal L_\LJJ\right)\). A complex number \(\lambda \in \Omega\left(\mathbfcal L_\LJJ\right)\) belongs to the spectrum \(\sigma(\mathbfcal L)\) if and only if
$\det \mathbfcal S_\LI(\lambda) = 0$~\cite{Griesemer2008}.

\begin{definition}
    For a given truncation \(\LI\), we define \emph{lower decay rates}
    \begin{equation}
        \Gamma_\LI
        :=
        \min \left \{
            \Re \gamma_{\bm n} - R_{\bm n}(\mathbfcal L):
            \bm n \in \LJ
        \right\},
        \qquad
        \Gamma_\LI'
        :=
        \min \left \{
            \Re \gamma_{\bm n} :
            \bm n \in \LJ
        \right\}
        ,
    \end{equation}
    and an \emph{upper decay rate}
    \begin{equation}
        \begin{aligned}
            \Upsilon_\LI
            &:=
            \min \left \{
                \Re \gamma_{\bm n} :
                \bm n \in \LJ,~
                \Re \gamma_{\bm n} > \max_{\bm m \in \LI} \Re \gamma_{\bm m}
            \right\}.
        \end{aligned}
    \end{equation}
\end{definition}
\begin{proposition}
    For any truncation of the hierarchy~$\LI$, the
    resolvents of the tail block satisfy
    \begin{equation}
        \begin{aligned}
            \left\|
            \mathbfcal R\left(z; \mathbfcal L_\LJJ\right)
            \right\|_1
            &\leqslant
            \frac{1}{\Re z + \Gamma_\LI},
            \qquad
            \Re z > -\Gamma_\LI, \\
            \left\|
            \mathbfcal R\left(z; \mathbfcal L_\LJJ'\right)
            \right\|_1
            &\leqslant
            \frac{1}{\Re z + \Gamma_\LI'},
            \qquad
            \Re z > -\Gamma_\LI',
        \end{aligned}
    \end{equation}
    where \(\mathbfcal L'\) is composed of the diagonal entries of the Liouvillian,
    $
        \mathcal L'_{\bm n\bm m}
        :=
        \mathcal L_{\bm n\bm m}\delta_{\bm n\bm m}.
    $

    \label{prop:simplest-bound}
\end{proposition}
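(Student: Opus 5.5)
Both bounds follow from Theorem~\ref{th:gershgorin-hard} applied to the tail operator itself, \(\mathbfcal L_\LJJ\) (respectively \(\mathbfcal L'_\LJJ\)). We regard it as an operator-valued matrix indexed by \(\LJ\). Its diagonal blocks are the \(\mathcal L_{\bm n\bm n}\) with \(\bm n\in\LJ\). Its Gershgorin radii satisfy \(R_{\bm n}(\mathbfcal L_\LJJ)=\sum_{\bm m\in\LJ,\bm m\ne\bm n}\|\mathcal L_{\bm m\bm n}\|\leqslant R_{\bm n}(\mathbfcal L)\), since dropping the rows in \(\LI\) only removes nonnegative terms. For \(\mathbfcal L'_\LJJ\) all radii vanish. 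Everything therefore reduces to a lower bound on \(\beta(z;\mathcal L_{\bm n\bm n})\) that is uniform in \(\bm n\), followed by a check of the hypothesis~\eqref{eq:qz-condition}.

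\textbf{Step 1: a lower bound on \(\beta\).} We have \(\mathcal L_{\bm n\bm n}\hat\rho=-i[\hat H,\hat\rho]-\gamma_{\bm n}\hat\rho\). This is the map \(\mathcal T\) of Lemma~\ref{lemma:inv-diagonal} with \(\hat A=\hat H\) and \(b=-\gamma_{\bm n}\). The set \(S(-\gamma_{\bm n},\hat H)\) lies on the vertical line \(\Re w=-\Re\gamma_{\bm n}\). Hence for \(\Re z>-\Re\gamma_{\bm n}\),
\begin{equation*}
\beta(z;\mathcal L_{\bm n\bm n})\geqslant \operatorname{dist}(z,S)\geqslant \Re z+\Re\gamma_{\bm n}.
\end{equation*}
One can also see this directly: \(e^{t\mathcal L_{\bm n\bm n}}\hat\rho=e^{-\gamma_{\bm n}t}e^{-i\hat Ht}\hat\rho e^{i\hat Ht}\) is a trace-norm contraction times \(e^{-\Re\gamma_{\bm n}t}\). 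The Laplace transform then gives \(\|(z-\mathcal L_{\bm n\bm n})^{-1}\|\leqslant(\Re z+\Re\gamma_{\bm n})^{-1}\), and this is equivalent to the bound on \(\beta\).

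\textbf{Step 2: the hypothesis and the bound for \(\mathbfcal L_\LJJ\).} Take \(\Re z>-\Gamma_\LI\). For every \(\bm n\in\LJ\),
\begin{equation*}
\beta(z;\mathcal L_{\bm n\bm n})-R_{\bm n}(\mathbfcal L_\LJJ)\geqslant \Re z+\Re\gamma_{\bm n}-R_{\bm n}(\mathbfcal L)\geqslant \Re z+\Gamma_\LI>0.
\end{equation*}
This gives the right-hand side of~\eqref{eq:resolvent-bound-hard-clean}, namely \((\Re z+\Gamma_\LI)^{-1}\).

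It remains to check~\eqref{eq:qz-condition}, which I expect to be the only delicate point, and that \(z\in\Omega(\mathbfcal L_\LJJ)\). A uniform gap \(\beta-R\geqslant\delta:=\Re z+\Gamma_\LI>0\) gives \(R/\beta\leqslant 1-\delta/\beta\), but this ratio could approach \(1\) if \(\beta\) grew unboundedly with \(R/\beta\to1\). This is ruled out by Lemma~\ref{lemma:gershgorin-radii}: \(R_{\bm n}\leqslant\sqrt{C(\Re\gamma_{\bm n}+\gamma)}\), while \(\beta\geqslant\Re z+\Re\gamma_{\bm n}\). Since \(\Re\gamma_{\bm n}\to\infty\) as \(|\bm n|\to\infty\), we get \(R_{\bm n}/\beta\to0\). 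Only finitely many \(\bm n\) have \(\Re\gamma_{\bm n}\) below any fixed level, and on each of these \(R/\beta<1\) strictly, so the supremum is \(<1\). The same estimates show that \(z\) is at positive distance from every \(G_{\bm n}(\mathbfcal L_\LJJ)\), uniformly in \(\bm n\), so \(z\) lies outside the closure of their union. Alternatively, this is exactly the argument behind Corollary~\ref{cor:resolvent-set}, which we can invoke with \(\LJ\) in place of \(\mathbb N_0^N\). Theorem~\ref{th:gershgorin-hard} then yields the first bound.

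\textbf{Step 3: the bound for \(\mathbfcal L'_\LJJ\).} This case is immediate. All radii are zero, so \(q=0\) whenever \(\beta>0\). The resolvent is block diagonal, with norm \(\sup_{\bm n\in\LJ}\|(z-\mathcal L_{\bm n\bm n})^{-1}\|\leqslant\sup_{\bm n\in\LJ}(\Re z+\Re\gamma_{\bm n})^{-1}=(\Re z+\Gamma'_\LI)^{-1}\) for \(\Re z>-\Gamma'_\LI\).
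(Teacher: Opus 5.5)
Your proof is correct and follows essentially the same route as the paper. Like the paper, you apply Theorem~\ref{th:gershgorin-hard} to the tail block using $R_{\bm n}(\mathbfcal L_{\LJJ})\leqslant R_{\bm n}(\mathbfcal L)$ and $\beta(z;\mathcal L_{\bm n\bm n})\geqslant \Re z+\Re\gamma_{\bm n}$ from Lemma~\ref{lemma:inv-diagonal}, with the growth bound of Lemma~\ref{lemma:gershgorin-radii} securing condition~\eqref{eq:qz-condition} and $z\in\Omega(\mathbfcal L_{\LJJ})$; your explicit check of these two conditions is the argument the paper gets by invoking Corollary~\ref{cor:resolvent-set}, and your Laplace-transform bound on $\beta$ is a valid shortcut.
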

\begin{proof}
    For \(\bm n \in \LJ\), one has
    $
        R_{\bm n}(\mathbfcal L_\LJJ)
        \leqslant
        R_{\bm n}(\mathbfcal L)
    $,
    since \(\mathbfcal L_\LJJ\) is a block of \(\mathbfcal L\). Lemma~\ref{lemma:inv-diagonal} provides a bound
    \begin{equation}
        \beta\left(z; \mathcal L_{\bm{nn}}\right) \geqslant 
        \dist\left(z; S\left(-\gamma_{\bm n}; \hat H\right)\right) \geqslant
        |\Re(z + \gamma_{\bm n})|.
        \label{eq:tmp}
    \end{equation}
    Thus, for each $z$ such that $\Re z > -\Gamma_\LI$ and $\bm n \in \LJ$, we have
    \begin{equation}
        \beta\left(z; \mathcal L_{\bm{nn}}\right) - R_{\bm n}(\mathbfcal L_\LJJ) \geqslant 
        \Re(z + \gamma_{\bm n}) - R_{\bm n}(\mathbfcal L) > 0,
    \end{equation}
    hence $z \notin G_{\bm n}(\mathbfcal L_\LJJ)$. Since $R_{\bm n}(\mathbfcal L) \leqslant \sqrt{C(\Re \gamma_{\bm n} + \gamma)} = o(\Re \gamma_{\bm n})$,
    \begin{equation}
        \inf_{\bm n \in \LJ} \left [ \beta\left(z; \mathcal L_{\bm{nn}}\right) - R_{\bm n}(\mathbfcal L_\LJJ) \right ] > 0,
    \end{equation}
    and $z \in \Omega(\mathbfcal L_\LJJ)$. Corollary~\ref{cor:resolvent-set} applied to $\mathbfcal L_\LJJ$ and bound~\eqref{eq:tmp} prove the proposition for the resolvent of $\mathbfcal L_\LJJ$ on the half-plane $\Re z > -\Gamma_\LI$.

    Since \(\mathbfcal L_\LJJ'\) has no off-diagonal hierarchy blocks, its Gershgorin radii vanish. Otherwise,
    the statement for \(\mathbfcal L_\LJJ'\) is proved in the same way.
\end{proof}

This Proposition clarifies the meaning of the lower decay rates: $\Gamma_{\LI}$ provides a simple bound on the domain and on the norm of the resolvent of the tail block $\mathbfcal L_\LJJ$, while $\Gamma_\LI'$ provides the same for the diagonal component of the block $\mathbfcal L_\LJJ$. Using Lemma~\ref{lemma:gershgorin-radii}, we can obtain convenient estimates for off-diagonal blocks of $\mathbfcal L$ through the upper decay rate:
\begin{equation}
    \begin{aligned}
        \left \| \mathbfcal L_\LJI \right \|_1
        &\leqslant
        \sup_{\bm n \in \LI} R_{\bm n}(\mathbfcal L)
        \leqslant
        \sqrt{C \left[\Upsilon_\LI + \gamma\right]}, \\
        \left \| \mathbfcal L_\LIJ \right \|_1
        &\leqslant
        \sup_{\bm n \in \LJ}
        \left[
            R_{\bm n}(\mathbfcal L) - R_{\bm n}\left(\mathbfcal L_\LJJ\right)
        \right] \\
        &\leqslant
        \max_{\bm n \in \LI}
        \max_{j \in \{1, \ldots, N\}}
        R_{\bm n + \bm e_j}(\mathbfcal L)
        \leqslant
        \sqrt{C\left[\Upsilon_\LI + \tilde \gamma\right]},
    \end{aligned}
    \label{eq:off-diag-block-bounds}
\end{equation}
where
$\tilde \gamma
:=
\gamma + \max_{j \in \{1, \ldots, N\}} \Re \gamma_j$.

Here, we introduce an important definition, which would further allow us to rigorously establish convergence of finite-dimensional approximations.
\begin{definition}
    We call a sequence of truncations $\left\{\LI_{k}\right\}_{k=1}^\infty$ \emph{exhausting}
    if for every finite $\mathbb I \subset \mathbb N_0^N$ there exists $k_0$ such that for each $k > k_0$, $\mathbb I \subset \LI_k$, and
    \begin{equation}
        \Lambda
        :=
        \limsup_{k \to \infty}
        \frac{\Upsilon_{\LI_k}}{\Gamma_{\LI_k}'}
        < +\infty.
    \end{equation}
\end{definition}
In simpler terms, any exhausting sequence of truncations eventually includes all the multiindices and the ratio of the upper and the lower decay rates is essentially bounded. The latter condition is needed so that norm of the off-diagonal Liouvillian blocks~\ref{eq:off-diag-block-bounds} does not grow too fast for the exhausting sequence of truncations.

Lower decay rates $\Gamma_\LI$ and $\Gamma_\LI'$ are different and conveniently used in different bounds, but they are essentially equivalent for any exhausting sequence of truncations.
\begin{lemma}
    For any exhausting sequence of truncations \(\{\LI_k\}_{k=1}^\infty\), $\Gamma'_{\LI_k} \to +\infty$ and $\Gamma_{\LI_k} / \Gamma'_{\LI_k} \to 1$ as $k \to \infty$.
    \label{lemma:decay-ratio}
\end{lemma}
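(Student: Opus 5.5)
The plan has two parts: first show that $\Gamma'_{\LI_k}\to+\infty$ using only the exhausting property, and then compare $\Gamma_{\LI_k}$ with $\Gamma'_{\LI_k}$ through Lemma~\ref{lemma:gershgorin-radii}.

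\emph{Step 1: $\Gamma'_{\LI_k}\to+\infty$.} Since every $\Re\gamma_j>0$, let $\gamma_{\min}:=\min_j \Re\gamma_j>0$. Then $\Re\gamma_{\bm n}\geqslant \gamma_{\min}|\bm n|$, where $|\bm n|=\sum_j n_j$. Fix $M>0$ and take the finite set
$$\mathbb I_M:=\{\bm n:\Re\gamma_{\bm n}\leqslant M\}\subset\{\bm n:|\bm n|\leqslant M/\gamma_{\min}\}.$$
Because the sequence is exhausting, $\mathbb I_M\subset\LI_k$ for all $k>k_0(M)$. For such $k$, every $\bm n\in\LJ_k$ satisfies $\Re\gamma_{\bm n}>M$. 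The set $\LJ_k$ is nonempty and the values $\Re\gamma_{\bm n}$ form a discrete set bounded below, so the minimum defining $\Gamma'_{\LI_k}$ is attained and exceeds $M$. As $M$ is arbitrary, $\Gamma'_{\LI_k}\to+\infty$.

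\emph{Step 2: $\Gamma_{\LI_k}/\Gamma'_{\LI_k}\to1$.}

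\begin{itemize}
\item \emph{Upper bound.} Radii are nonnegative, so $\Re\gamma_{\bm n}-R_{\bm n}\leqslant\Re\gamma_{\bm n}$ for every $\bm n$. Hence $\Gamma_\LI\leqslant\Gamma'_\LI$.
\item \emph{Lower bound.} By Lemma~\ref{lemma:gershgorin-radii}, for $\bm n\in\LJ$,
$$\Re\gamma_{\bm n}-R_{\bm n}(\mathbfcal L)\geqslant f(\Re\gamma_{\bm n}),\qquad f(x):=x-\sqrt{C(x+\gamma)}.$$
The function $f$ is increasing for $x>C/4-\gamma$. Once $\Gamma'_{\LI_k}$ exceeds this threshold, which Step 1 guarantees eventually, we get $\Gamma_{\LI_k}\geqslant f(\Gamma'_{\LI_k})$. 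One should first check that the minimum defining $\Gamma_\LI$ is attained. It is, because $f(\Re\gamma_{\bm n})\to\infty$ as $|\bm n|\to\infty$, so only finitely many candidates matter.
\item \emph{Conclusion.} Combining the two bounds gives
$$1-\frac{\sqrt{C(\Gamma'_{\LI_k}+\gamma)}}{\Gamma'_{\LI_k}}\leqslant\frac{\Gamma_{\LI_k}}{\Gamma'_{\LI_k}}\leqslant1.$$
The left side tends to $1$ by Step 1.
\end{itemize}

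The main obstacle is the monotonicity step in the lower bound. The minimum in $\Gamma_\LI$ could a priori be achieved at an $\bm n$ with a small $\Re\gamma_{\bm n}$ where $f$ is not monotone. Step 1 removes this case, since eventually all tail indices have $\Re\gamma_{\bm n}$ beyond the threshold. The condition $\Lambda<\infty$ in the definition of an exhausting sequence is not needed for this lemma.
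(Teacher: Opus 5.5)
Your proof is correct and follows essentially the same route as the paper. Finiteness of the sublevel sets $\{\bm n:\Re\gamma_{\bm n}\le M\}$, together with the exhausting property, gives $\Gamma'_{\mathbb T_k}\to+\infty$; Lemma~\ref{lemma:gershgorin-radii} and the eventual monotonicity of $x-\sqrt{C(x+\gamma)}$ then give the same squeeze $1-\sqrt{C(\Gamma'_{\mathbb T_k}+\gamma)}/\Gamma'_{\mathbb T_k}\le\Gamma_{\mathbb T_k}/\Gamma'_{\mathbb T_k}\le 1$, with your explicit threshold $x>C/4-\gamma$ and the check that both minima are attained being details the paper leaves implicit.
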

The proof of this lemma is given in Appendix~\ref{sec:lemma-decay-ratio-proof}.

\begin{corollary}
    A bounded set \(\Omega \subset \mathbb C\) contains at most finitely many spectral values of \(\mathbfcal L\).
    \label{lemma:spectrum-discreteness}
\end{corollary}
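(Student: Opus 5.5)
Given a bounded set \(\Omega\subset\mathbb C\), pick \(M>0\) with \(|z|\leqslant M\) for all \(z\in\Omega\). The plan is to choose a single truncation \(\LI\) whose tail is so strongly damped that \(\Omega\) lies inside the region where the Schur complement \(\mathbfcal S_\LI(z)\) is defined and holomorphic. Spectral values in \(\Omega\) then coincide with zeros of a scalar holomorphic function, and such zeros cannot accumulate inside a compact set.

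\emph{Choice of truncation.} Take any exhausting sequence of truncations, for instance \(\LI_k=\{\bm n:\Re\gamma_{\bm n}\leqslant k\}\), which is finite because all \(\Re\gamma_j>0\). By Lemma~\ref{lemma:decay-ratio}, \(\Gamma_{\LI_k}\geqslant \tfrac12\Gamma'_{\LI_k}\to+\infty\). Fix \(k\) with \(\Gamma_{\LI_k}>M+1\) and write \(\LI:=\LI_k\). Alternatively, Lemma~\ref{lemma:gershgorin-radii} gives \(R_{\bm n}=o(\Re\gamma_{\bm n})\) directly, so \(\Re\gamma_{\bm n}-R_{\bm n}\to\infty\) and a large enough level set works.

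\emph{Holomorphy on a half-plane.} By Proposition~\ref{prop:simplest-bound} (its proof shows that the half-plane lies in \(\Omega(\mathbfcal L_\LJJ)\)), the tail resolvent \((z-\mathbfcal L_\LJJ)^{-1}\) exists and is bounded on the open half-plane \(H:=\{\Re z>-\Gamma_\LI\}\). This half-plane contains the closed disk \(D:=\{|z|\leqslant M\}\supset\Omega\). Because \(\mathbfcal L_\LIJ\) and \(\mathbfcal L_\LJI\) are bounded, with the bounds in~\eqref{eq:off-diag-block-bounds}, the Schur complement \(\mathbfcal S_\LI(z)\) is a holomorphic finite matrix-valued function on \(H\). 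Hence \(f(z):=\det\mathbfcal S_\LI(z)\) is holomorphic on \(H\), and by the Schur-complement criterion \(\sigma(\mathbfcal L)\cap H=\{f=0\}\).

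\emph{\(f\not\equiv0\).} For real \(z\to+\infty\), Proposition~\ref{prop:simplest-bound} gives \(\|(z-\mathbfcal L_\LJJ)^{-1}\|_1\leqslant (z+\Gamma_\LI)^{-1}\to0\). Therefore \(\mathbfcal S_\LI(z)=z-\mathbfcal L_\LII+o(1)\), and this is invertible for \(z\) larger than \(\|\mathbfcal L_\LII\|_1+1\), so \(f(z)\neq0\) there. Since \(H\) is connected, the identity theorem says the zeros of \(f\) are isolated in \(H\). The compact disk \(D\subset H\) can therefore contain only finitely many of them, which proves the corollary.

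The main obstacle I expect is the justification that the Schur-complement criterion and the holomorphy hold on all of \(H\). This requires \(H\subset\Omega(\mathbfcal L_\LJJ)\), which the proof of Proposition~\ref{prop:simplest-bound} establishes. It also uses that zeros of \(\det\mathbfcal S_\LI\) are exactly the spectral points: by~\eqref{eq:resolvent-blocks}, invertibility of \(\mathbfcal S_\LI\) yields a bounded resolvent, and conversely a kernel vector of \(\mathbfcal S_\LI(\lambda)\) lifts to an eigenvector of \(\mathbfcal L\) with \(\lambda\) as eigenvalue.
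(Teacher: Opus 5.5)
Your proof is correct and follows the paper's argument: you choose a truncation $\LI$ with $\Gamma_\LI$ large enough that the bounded set lies in $\Omega(\mathbfcal L_\LJJ)$, and then count the zeros of the scalar holomorphic function $\det\mathbfcal S_\LI(z)$ on a compact set. You also check explicitly that $\det\mathbfcal S_\LI\not\equiv 0$ on the connected half-plane, using $z\to+\infty$ along the real axis, which fills in a step the paper leaves implicit.
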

\begin{proof}
    Choose an exhausting sequence of truncations. According to Lemma~\ref{lemma:decay-ratio}, $\Gamma_\LI \to +\infty$ along this sequence, so we can pick a truncation \(\LI\) such that
       $ -\Gamma_\LI < \inf_{z \in \Omega} \Re z$, 
    hence
    $
        \overline{\Omega}
        \subset
        \Omega\left(\mathbfcal L_\LJJ\right).
    $
    Therefore, \(\mathbfcal S_\LI(z)\) is holomorphic in a neighborhood of \(\overline{\Omega}\). Since \(\LI\) is finite, \(\det \mathbfcal S_\LI(z)\) is a scalar holomorphic function. Its zeros in the compact set \(\overline{\Omega}\) are isolated and hence finite in number~\cite{Ablowitz2003}. Since $\mathbfcal S_\LI(z)$ is the Schur complement of $z - \mathbfcal L$, these zeros are precisely the spectral values of \(\mathbfcal L\).
\end{proof}

\section{Finite-dimensional approximation}
\label{sec:truncation}

In this section, we fix a truncation scheme for the HEOM Liouvillian and study the spectral properties of the resulting finite-dimensional approximations.
\begin{definition}[Truncated Liouvillian]
    For a given truncation \(\LI\) and Liouvillian $\mathbfcal L$, we define the \emph{truncated Liouvillian} as
\begin{equation}
    \mathbfcal L_\LI
    :=
    \mathbfcal L_\LII
    -
    \mathbfcal L_\LIJ
    \left(\mathbfcal L'_\LJJ\right)^{-1}
    \mathbfcal L_\LJI.
\end{equation}
    \label{def:truncated-liouvillian}
\end{definition}

We keep a finite part of the hierarchy explicitly and account for the effect of the discarded tail through a Schur-complement-type terminator.
Eigenvalues of the approximation are given by the solutions of $\det\left(z-\mathbfcal L_\LI\right)=0$. Since \(\mathbfcal L_\LII\) and \(\mathbfcal L_\LIJ\) have a common left zero eigenvector, the truncated Liouvillian \(\mathbfcal L_\LI\) has the eigenvalue \(\lambda = 0\). The corresponding right eigenvector represents the steady state of the approximation.
In the following, we use the results of Sec.~\ref{sec:spectral} to establish convergence of the spectrum of \(\mathbfcal L_\LI\) to the spectrum of the exact Liouvillian for any exhausting sequence of truncations. We then discuss the implications for unstable modes.

\subsection{Spectral convergence}

\begin{lemma}
    For a truncation $\LI$ such that $\Gamma_\LI > 0$ and a complex $z$ such that $\Re z>-\Gamma_\LI$, let
    \begin{equation}
        \varepsilon_\LI(z)
        :=
        \frac{
            C\sqrt{(\Upsilon_\LI+\gamma)(\Upsilon_\LI+\tilde\gamma)}
        }{\Gamma_\LI}
        \left[
            \frac{|z|}{\Re z+\Gamma_\LI}
            +
            \frac{\sqrt{C(\Gamma_\LI'+\gamma)}}{\Gamma_\LI'}
        \right].
        \label{eq:schur-error-majorant}
    \end{equation}
    Then
    \begin{equation}
        \left \| \mathbfcal S_\LI(z) - z + {\mathbfcal L}_\LI \right \|_1
        \leqslant
        \varepsilon_\LI(z).
        \label{eq:schur-error-bound}
    \end{equation}
    Moreover, for any bounded $\Omega \subset \mathbb C$ and any exhausting sequence of truncations $\{\LI_k\}_{k=1}^{\infty}$,
    \begin{equation}
        \lim_{k\to\infty}
        \sup_{z\in\Omega}
        \varepsilon_{\LI_k}(z)
        =0.
    \end{equation}
    \label{lemma:schur-convergence}
\end{lemma}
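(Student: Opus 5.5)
The plan is to write the difference $\mathbfcal S_\LI(z) - z + \mathbfcal L_\LI$ as a sandwich of tail resolvents between the off-diagonal blocks $\mathbfcal L_\LIJ$ and $\mathbfcal L_\LJI$, and then bound each factor with Proposition~\ref{prop:simplest-bound}, Lemma~\ref{lemma:gershgorin-radii} and the block bounds~\eqref{eq:off-diag-block-bounds}. By the definitions of the Schur complement and of Definition~\ref{def:truncated-liouvillian},
\begin{equation*}
    \mathbfcal S_\LI(z) - z + \mathbfcal L_\LI
    =
    -\mathbfcal L_\LIJ
    \left[
        \mathbfcal R\left(z;\mathbfcal L_\LJJ\right) - \mathbfcal R\left(0;\mathbfcal L'_\LJJ\right)
    \right]
    \mathbfcal L_\LJI .
\end{equation*}
Both resolvents exist. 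Since $\Gamma_\LI>0$ and $\Re z>-\Gamma_\LI$, Proposition~\ref{prop:simplest-bound} applies at $z$ and at $0$ for $\mathbfcal L_\LJJ$. It also applies at $0$ for $\mathbfcal L'_\LJJ$, because $\Gamma'_\LI\geqslant\Gamma_\LI>0$.

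Next, insert $\mathbfcal R(0;\mathbfcal L_\LJJ)$ and split the bracket into two differences, each handled by the resolvent identity. Let $\mathbfcal D := \mathbfcal L_\LJJ - \mathbfcal L'_\LJJ$ be the off-diagonal part of the tail block. Then
\begin{equation*}
    \mathbfcal R(z;\mathbfcal L_\LJJ) - \mathbfcal R(0;\mathbfcal L'_\LJJ)
    =
    -z\,\mathbfcal R(z;\mathbfcal L_\LJJ)\mathbfcal R(0;\mathbfcal L_\LJJ)
    +
    \mathbfcal R(0;\mathbfcal L_\LJJ)\,\mathbfcal D\,\mathbfcal R(0;\mathbfcal L'_\LJJ).
\end{equation*}
By Proposition~\ref{prop:simplest-bound}, the first term has norm at most $|z|/[(\Re z+\Gamma_\LI)\Gamma_\LI]$. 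In the second term, $\|\mathbfcal R(0;\mathbfcal L_\LJJ)\|_1\leqslant 1/\Gamma_\LI$.

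For $\mathbfcal D\,\mathbfcal R(0;\mathbfcal L'_\LJJ)$ I will use the column-sum bound for $\|\cdot\|_1$ directly rather than the product of norms. Column $\bm n\in\LJ$ has entries $\mathcal L_{\bm m\bm n}(-\mathcal L_{\bm n\bm n})^{-1}$, $\bm m\neq\bm n$. By Lemma~\ref{lemma:inv-diagonal}, $\|(\mathcal L_{\bm n\bm n})^{-1}\|\leqslant 1/\Re\gamma_{\bm n}$, so the column sum is at most $R_{\bm n}(\mathbfcal L)/\Re\gamma_{\bm n}$. Lemma~\ref{lemma:gershgorin-radii} bounds this by $\sqrt{C(x+\gamma)}/x$ with $x=\Re\gamma_{\bm n}$. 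This function is decreasing in $x>0$, and $x\geqslant\Gamma'_\LI$ on $\LJ$, so the ratio is at most $\sqrt{C(\Gamma'_\LI+\gamma)}/\Gamma'_\LI$. Multiplying by $\|\mathbfcal L_\LIJ\|_1\|\mathbfcal L_\LJI\|_1\leqslant C\sqrt{(\Upsilon_\LI+\gamma)(\Upsilon_\LI+\tilde\gamma)}$ from~\eqref{eq:off-diag-block-bounds} gives exactly~\eqref{eq:schur-error-bound}. The one thing to get right in this first part is the choice of intermediate resolvent. Splitting through $\mathbfcal R(0;\mathbfcal L_\LJJ)$, rather than going directly from $\mathbfcal R(z;\mathbfcal L_\LJJ)$ to $\mathbfcal R(0;\mathbfcal L'_\LJJ)$, is what produces the common factor $1/\Gamma_\LI$. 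It also keeps the factor $\mathbfcal D$ next to the diagonal resolvent, where the column sums can be estimated.

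For the limit, take $M:=\sup_{z\in\Omega}|z|<\infty$. By Lemma~\ref{lemma:decay-ratio}, $\Gamma'_{\LI_k}\to\infty$ and $\Gamma_{\LI_k}/\Gamma'_{\LI_k}\to1$. By the exhausting property, $\Upsilon_{\LI_k}\leqslant(\Lambda+1)\Gamma'_{\LI_k}$ for large $k$. Hence the prefactor, which is at most $C(\Upsilon_{\LI_k}+\tilde\gamma)/\Gamma_{\LI_k}$, stays bounded. For $k$ large enough that $\Gamma_{\LI_k}>2M$, the bracket is at most $M/(\Gamma_{\LI_k}-M)+\sqrt{C(\Gamma'_{\LI_k}+\gamma)}/\Gamma'_{\LI_k}$. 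Both terms tend to zero uniformly in $z\in\Omega$, which proves the uniform convergence $\sup_{z\in\Omega}\varepsilon_{\LI_k}(z)\to0$.
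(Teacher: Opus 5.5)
Your proposal is correct and follows the paper's proof essentially step for step: you use the same representation $-\mathbfcal L_\LIJ[\cdots]\mathbfcal L_\LJI$, split the bracket through $(\mathbfcal L_\LJJ)^{-1}$, take the three resolvent bounds from Proposition~\ref{prop:simplest-bound}, bound the column sums by $\sup_{\bm n\in\LJ} R_{\bm n}(\mathbfcal L)/\Re\gamma_{\bm n}$, and pass to the limit via Lemma~\ref{lemma:decay-ratio} and the exhausting property. Your sign on the $\mathbfcal R(0;\mathbfcal L_\LJJ)\,\mathbfcal D\,\mathbfcal R(0;\mathbfcal L'_\LJJ)$ term is the correct one; the paper's minus sign there is a harmless slip, since only norms enter the bound.
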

The statement of this lemma follows from bounds on the resolvent of discarded blocks formulated in Prop.~\ref{prop:simplest-bound} and Lemma~\ref{lemma:decay-ratio}. The detailed proof is given in Appendix~\ref{sec:lemma-schur-convergence-proof}. Finally, we are ready to formulate one of the main results of the paper: eigenvalues of truncated Liouvillians converge to spectral values of the exact Liouvillian of HEOM.

\begin{theorem}[Spectral convergence]
Let \(K \subset \mathbb C\) be a compact set. For each \(\lambda \in \sigma_K(\mathbfcal L) := \sigma(\mathbfcal L) \cap K\), choose a bounded open neighborhood \(V_\lambda\) with rectifiable boundary \(\partial V_\lambda\) such that
    $\overline V_\lambda \cap \sigma(\mathbfcal L) = \{\lambda\}$,
and let
    $K_- := K \setminus \bigcup_{\lambda \in \sigma_K(\mathbfcal L)} V_\lambda$.
    Then for any exhausting sequence of truncations $\{\LI_k\}_{k=1}^\infty$ there is $k_0$ such that for all $k > k_0$ the following statements hold:
\begin{enumerate}
    \item
        Each neighborhood \(V_\lambda\) contains at least one eigenvalue of \(\mathbfcal L_{\LI_k}\). Moreover, the number of eigenvalues of \(\mathbfcal L_{\LI_k}\) in \(V_\lambda\), counted with algebraic multiplicity, coincides with the algebraic multiplicity of \(\lambda\) as an eigenvalue of \(\mathbfcal L\).

    \item
        The operator \(\mathbfcal L_{\LI_k}\) has no eigenvalues in \(K_-\).
\end{enumerate}
\label{th:spectrum-convergence}
\end{theorem}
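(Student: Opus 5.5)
The plan is to compare the two finite matrix-valued holomorphic functions $\mathbfcal S_{\LI_k}(z)$ and $z - \mathbfcal L_{\LI_k}$ on a bounded neighborhood of $K$. We then transfer zeros of determinants using the operator-valued Rouché theorem (Gohberg--Sigal) or, more simply, a scalar Rouché argument for determinants.

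\emph{Step 1: working domain.} Let $\Omega$ be a bounded open set containing $K \cup \bigcup_\lambda \overline V_\lambda$. By Lemma~\ref{lemma:decay-ratio}, $\Gamma_{\LI_k}\to+\infty$. Hence for large $k$ we have $-\Gamma_{\LI_k} < \inf_{z\in\Omega}\Re z$. By Proposition~\ref{prop:simplest-bound}, both $\mathbfcal S_{\LI_k}(z)$ and $z-\mathbfcal L_{\LI_k}$ are then defined and holomorphic on a neighborhood of $\overline\Omega$; here $(\mathbfcal L'_\LJJ)^{-1}$ exists because $0$ lies in the half-plane $\Re z>-\Gamma'_{\LI}$. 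By Lemma~\ref{lemma:schur-convergence}, $\delta_k:=\sup_{\Omega}\|\mathbfcal S_{\LI_k}(z)-(z-\mathbfcal L_{\LI_k})\|_1\to0$.

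\emph{Step 2: uniform invertibility of $\mathbfcal S_{\LI_k}$ away from the spectrum.} The set $K' := K_-\cup\bigcup_\lambda\partial V_\lambda$ is compact and disjoint from $\sigma(\mathbfcal L)$, so $\mathbfcal R(z;\mathbfcal L)$ is continuous there and $M:=\sup_{K'}\|\mathbfcal R(z;\mathbfcal L)\|_1<\infty$. By~\eqref{eq:resolvent-blocks}, $\mathbfcal S_{\LI_k}(z)^{-1}=\mathbfcal R_{\LI_k\LI_k}(z;\mathbfcal L)$ is a compression of the full resolvent, so $\|\mathbfcal S_{\LI_k}(z)^{-1}\|_1\le M$ uniformly in $k$ and $z\in K'$. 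This uniformity in $k$ is the key point, and it is where the Schur-complement viewpoint pays off. Choose $k_0$ with $\delta_k M<1$ for $k>k_0$. A Neumann series then shows that $z-\mathbfcal L_{\LI_k}=\mathbfcal S_{\LI_k}(z)\,[1-\mathbfcal S_{\LI_k}(z)^{-1}(\mathbfcal S_{\LI_k}(z)-z+\mathbfcal L_{\LI_k})]$ is invertible on $K'$. This proves statement 2, and it also shows there are no eigenvalues on $\partial V_\lambda$.

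\emph{Step 3: counting in $V_\lambda$.} Apply the Gohberg--Sigal operator Rouché theorem to the finite-matrix functions $F(z)=\mathbfcal S_{\LI_k}(z)$ and $F(z)+E(z)$ on $V_\lambda$, where $\|F^{-1}E\|_1<1$ on $\partial V_\lambda$. Equivalently, use the homotopy $F+tE$, $t\in[0,1]$: it is invertible on $\partial V_\lambda$ for every $t$, so $\frac{1}{2\pi i}\oint_{\partial V_\lambda}\frac{d}{dz}\log\det(F+tE)\,dz$ is integer-valued and continuous in $t$, hence constant. At $t=1$ this integral counts the eigenvalues of $\mathbfcal L_{\LI_k}$ in $V_\lambda$ with algebraic multiplicity.

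\emph{Main obstacle.} The delicate point is identifying, at $t=0$, the number of zeros of $\det\mathbfcal S_{\LI_k}$ in $V_\lambda$ with the algebraic multiplicity of $\lambda$ for $\mathbfcal L$. I would do this via the Riesz projection. The algebraic multiplicity is $\operatorname{rank}P_\lambda$ with $P_\lambda=\frac{1}{2\pi i}\oint\mathbfcal R(z;\mathbfcal L)\,dz$, and this is finite because $\lambda$ is an isolated zero of a holomorphic determinant. The root-multiplicity of $\det\mathbfcal S$ equals the Gohberg--Sigal multiplicity $\operatorname{tr}\frac{1}{2\pi i}\oint \mathbfcal S'\mathbfcal S^{-1}dz$. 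Using $\mathbfcal S'(z)=1+\mathbfcal L_\LIJ(z-\mathbfcal L_\LJJ)^{-2}\mathbfcal L_\LJI$ together with the block formulas~\eqref{eq:resolvent-blocks}, this trace should equal $\operatorname{tr}P_\lambda$. The identity holds because $\operatorname{tr}$ of the $\LJJ$ block contribution reduces, by cyclicity, to the $\mathbfcal L_\LJI\mathbfcal R_\LII\mathbfcal L_\LIJ(z-\mathbfcal L_\LJJ)^{-2}$ term, while $(z-\mathbfcal L_\LJJ)^{-1}$ is holomorphic in $V_\lambda$ and contributes zero. Justifying cyclicity of the trace for finite-rank products in the $1$-norm setting needs care but is standard. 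Nonemptiness in statement 1 follows, since the multiplicity is $\ge1$.
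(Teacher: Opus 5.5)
Your proposal follows the paper's proof essentially step for step. Both use the uniform bound $\|\mathbfcal S_{\LI_k}(z)^{-1}\|_1\le M$ obtained from $\mathbfcal S_{\LI_k}(z)^{-1}=\mathbfcal R_{\LI_k\LI_k}(z;\mathbfcal L)$, the Neumann-series argument on $K_-$ and $\partial V_\lambda$, and the homotopy between $\mathbfcal S_{\LI_k}(z)$ and $z-\mathbfcal L_{\LI_k}$ with the argument principle. The one place you go beyond the paper is the $t=0$ count: the paper simply asserts that the number of zeros of $\det\mathbfcal S_{\LI_k}$ in $V_\lambda$ equals the algebraic multiplicity of $\lambda$, whereas your Jacobi-formula and Riesz-projection computation actually justifies it (using $\mathbfcal S'=\mathbfcal I+\mathbfcal L_{\LIJ}(z-\mathbfcal L_{\LJJ})^{-2}\mathbfcal L_{\LJI}$, cyclicity of the trace, and holomorphy of $(z-\mathbfcal L_{\LJJ})^{-1}$ in $V_\lambda$).
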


\begin{proof}
    Let
        $K_+
        :=
        K \cup \bigcup_{\lambda \in \sigma_K(\mathbfcal L)} \overline{V_\lambda}$.
    Since \(K\) is compact and \(\sigma_K(\mathbfcal L)\) is finite by Lemma~\ref{lemma:spectrum-discreteness}, the set \(K_+\) is compact.

    For each \(\lambda \in \sigma_K(\mathbfcal L)\), define
    \begin{equation}
        M_\lambda
        :=
        \max_{z \in \partial V_\lambda}
        \left \|
        \mathbfcal R(z;\mathbfcal L)
        \right \|_1 .
    \end{equation}
    This maximum exists because \(\partial V_\lambda\) is compact and disjoint from \(\sigma(\mathbfcal L)\). Likewise, define
    \begin{equation}
        M_-
        :=
        \max_{z \in K_-}
        \left \|
        \mathbfcal R(z;\mathbfcal L)
        \right \|_1 ,
    \end{equation}
    which is finite because \(K_-\) is a compact subset of \(\varrho(\mathbfcal L)\). Finally, set
    \begin{equation}
        M := \max\left(M_-, \max_{\lambda \in \sigma_K(\mathbfcal L)} M_\lambda\right).
    \end{equation}

    By Lemma~\ref{lemma:schur-convergence} there exists \(k_0\) such that for all $k > k_0$
    \begin{equation}
        \sup_{z \in K_+}
        \left \|
        \mathbfcal S_{\LI_k}(z) - z + \mathbfcal L_{\LI_k}
        \right \|_1
        <
        \frac{1}{M}.
        \label{eq:small-schur-error}
    \end{equation}

    We first prove the statement for a fixed \(\lambda \in \sigma(\mathbfcal L)\cap K\). For \(t \in [0,1]\), define
    \begin{equation}
        \mathbfcal F_{\LI_k}(z,t)
        :=
        t\left(z-\mathbfcal L_{\LI_k}\right)
        +
        (1-t)\mathbfcal S_{\LI_k}(z).
    \end{equation}
    Since
    \begin{equation}
        \mathbfcal F_{\LI_k}(z,t)
        =
        \left[
            \mathbfcal I
            -
            t\left(
                \mathbfcal S_{\LI_k}(z)-z+\mathbfcal L_{\LI_k}
            \right)
            \mathbfcal S_{\LI_k}(z)^{-1}
        \right]
    \mathbfcal S_{\LI_k}(z),
    \end{equation}
    it is enough to show that the bracket is invertible for all $t\in[0, 1]$ and all $z \in \partial V_\lambda$. By the Schur-complement representation,
        $\mathbfcal S_{\LI_k}(z)^{-1}
        =
        \mathbfcal R_{\LI_k \LI_k}(z;\mathbfcal L)
        $,
    and therefore
    \begin{equation}
        \left \|
        \mathbfcal S_{\LI_k}(z)^{-1}
        \right \|_1
        \leqslant
        \left \|
        \mathbfcal R(z;\mathbfcal L)
        \right \|_1
        \leqslant
        M
        \qquad
        \text{for } z \in \partial V_\lambda.
    \end{equation}
    Together with~\eqref{eq:small-schur-error}, this yields
    \begin{equation}
        \left \|
        t\left(
            \mathbfcal S_{\LI_k}(z)-z+\mathbfcal L_{\LI_k}
        \right)
        \mathbfcal S_{\LI_k}(z)^{-1}
        \right \|_1
        < 1,
    \end{equation}
    so \(\mathbfcal F_{\LI_k}(z,t)\) is invertible for all \(z \in \partial V_\lambda\) and all \(t \in [0,1]\).
    Since \(\LI_k\) is finite, \(\mathbfcal F_{\LI_k}(z,t)\) is matrix-valued.      Let us denote
    $f_t(z) := \det \mathbfcal F_{\LI_k}(z,t)$.
    For each fixed \(t\), the function \(f_t(z)\) is holomorphic in \(V_\lambda\) and continuous on \(\partial V_\lambda\). Since $\mathbfcal F_{\LI_k}(z, t)$ is invertible,
    $f_t(z) \neq 0$
    for all  $z \in \partial V_\lambda$, $t \in [0,1]$.
    Hence the logarithmic derivative \(f_t'(z)/f_t(z)\) is holomorphic in a neighborhood of \(\partial V_\lambda\), and the argument principle~\cite{Ablowitz2003} gives
    \begin{equation}
        N_\lambda(t)
        :=
        \frac{1}{2\pi i}
        \oint_{\partial V_\lambda}
        \frac{f_t'(z)}{f_t(z)}\,\mathrm dz,
        \label{eq:Nt-argument-principle}
    \end{equation}
    where \(N_\lambda(t)\) is exactly the number of zeros of \(f_t\) in \(V_\lambda\), counted with multiplicity. For \(t=0\), $N_\lambda(t)$ is equal to the number of the spectral values of \(\mathbfcal L\) in \(V_\lambda\). For \(t=1\), $N_\lambda(t)$ gives the number of the eigenvalues of \(\mathbfcal L_\LI\) in \(V_\lambda\).

    We show that \(N_\lambda(t)\) is continuous in \(t\). Since \(\partial V_\lambda \times [0,1]\) is compact and \(f_t(z)\) is continuous in $(z,t)$, the function \(|f_t(z)|\) attains a strictly positive minimum on \(\partial V_\lambda \times [0,1]\), therefore $1 / f_t(z)$ is also continuous in $(z, t)$. Moreover, because \(\mathbfcal F_{\LI_k}(z,t)\) is continuous in \((z,t)\) on \(\partial V_\lambda \times [0,1]\) and holomorphic in \(z\), the derivative \(f_t'(z)\) depends continuously on \((z,t)\) on \(\partial V_\lambda \times [0,1]\). Hence, the logarithmic derivative ${f_t'(z)}/{f_t(z)}$ is continuous on \(\partial V_\lambda \times [0,1]\). It follows that the integral in~\eqref{eq:Nt-argument-principle} depends continuously on \(t\), so \(N_\lambda(t)\) is a continuous function on \([0,1]\). On the other hand, \(N_\lambda(t)\) is integer-valued by the argument principle. Being both continuous and integer-valued on the connected set \([0,1]\), it must be constant, therefore $N_\lambda(0) = N_\lambda(1)$.

    We now prove the second statement. For \(z \in K_-\), we can express
    \begin{equation}
        z-\mathbfcal L_{\LI_k}
        =
        \left[
            \mathbfcal I
            -
            \left(
                \mathbfcal S_{\LI_k}(z)-z+\mathbfcal L_{\LI_k}
            \right)
            \mathbfcal S_{\LI_k}(z)^{-1}
        \right]
    \mathbfcal S_{\LI_k}(z).
    \end{equation}
    Since \(K_- \subset \varrho(\mathbfcal L)\), we have
    \begin{equation}
        \left \|
        \mathbfcal S_{\LI_k}(z)^{-1}
        \right \|_1
        \leqslant
        \left \|
        \mathbfcal R(z;\mathbfcal L)
        \right \|_1
        \leqslant
        M
        \qquad
        \text{for all } z \in K_-.
    \end{equation}
    Hence, by~\eqref{eq:small-schur-error},
    \begin{equation}
        \left \|
        \left(
            \mathbfcal S_{\LI_k}(z)-z+\mathbfcal L_{\LI_k}
        \right)
        \mathbfcal S_{\LI_k}(z)^{-1}
        \right \|_1
        < 1
        \qquad
        \text{for all } z \in K_-.
    \end{equation}
    Therefore, \(z-\mathbfcal L_{\LI_k}\) is invertible for all \(z \in K_-\), and \(\mathbfcal L_{\LI_k}\) has no eigenvalues in \(K_-\).
\end{proof}

\subsection{Stability of the truncated Liouvillian}
In order to analyze stability of the truncated Liouvillian, we are interested in its eigenvalues with non-negative real part. We are going to use Theorem~\ref{th:spectrum-convergence} to show that unstable eigenvalues of the truncated Liouvillian approach the spectral values of the exact one. But to apply it, we need to locate the relevant eigenvalues within a compact set.
\begin{lemma}
    Let $\sigma_+(\mathbfcal A) := \{\lambda \in \sigma(\mathbfcal A): \Re \lambda \geqslant 0\}$ for an arbitrary operator $\mathbfcal A$.
    For any exhausting sequence of truncations $\{\LI_k\}_{k=1}^\infty$, there exist a compact set $G_+(\mathbfcal L)$ and $k_0$ such that $\sigma_+(\mathbfcal L) \subset G_+(\mathbfcal L)$ and, for every $k>k_0$,
    $\sigma_+(\mathbfcal L_{\LI_k}) \subset G_+(\mathbfcal L)$.
    \label{lemma:unstable-spectrum}
\end{lemma}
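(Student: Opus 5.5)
The plan is to apply a Gershgorin-type localisation to the exact Liouvillian and to the truncated Liouvillians simultaneously. The key observation is that the Schur-complement terminator in $\mathbfcal L_{\LI}$ is a uniformly bounded perturbation along any exhausting sequence. I will build $G_+(\mathbfcal L)$ as the part of an enlarged union of Gershgorin sets lying in the closed right half-plane. Then I will show that only finitely many multiindices contribute, which makes the set compact.

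\emph{Step 1: uniform bound on the terminator.} Write $\mathbfcal L_{\LI}=\mathbfcal L_\LII-\mathbfcal D_\LI$ with $\mathbfcal D_\LI:=\mathbfcal L_\LIJ(\mathbfcal L'_\LJJ)^{-1}\mathbfcal L_\LJI$. Proposition~\ref{prop:simplest-bound} at $z=0$ gives $\|(\mathbfcal L'_\LJJ)^{-1}\|_1\leqslant 1/\Gamma'_\LI$. Combining this with the off-diagonal block bounds \eqref{eq:off-diag-block-bounds} gives
$$\|\mathbfcal D_\LI\|_1\leqslant \frac{C\sqrt{(\Upsilon_\LI+\gamma)(\Upsilon_\LI+\tilde\gamma)}}{\Gamma'_\LI}.$$
Along an exhausting sequence, Lemma~\ref{lemma:decay-ratio} gives $\Gamma'_{\LI_k}\to+\infty$, and $\limsup_k\Upsilon_{\LI_k}/\Gamma'_{\LI_k}=\Lambda<\infty$. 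Hence the right-hand side is eventually bounded by a constant $c_0$, for instance $c_0:=C(\Lambda+1)$, for all $k>k_0$. I expect this to be the main step: the whole point of the exhausting condition is that this correction does not blow up.

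\emph{Step 2: Gershgorin localisation of $\sigma(\mathbfcal L_{\LI_k})$.} $\mathbfcal L_{\LI_k}$ is an operator matrix over the finite index set $\LI_k$, with diagonal blocks $\mathcal L_{\bm n\bm n}-(\mathcal D_\LI)_{\bm n\bm n}$. I treat the whole of $\mathbfcal D_\LI$ as part of the off-diagonal perturbation. Concretely, if for every $\bm n\in\LI_k$ one has $\beta(z;\mathcal L_{\bm n\bm n})>R_{\bm n}(\mathbfcal L)+c_0$, then the matrix $z-\mathbfcal L_{\LI_k}$ is column diagonally dominant with column sums of off-diagonal-plus-correction norms at most $R_{\bm n}(\mathbfcal L)+\|\mathbfcal D_\LI\|_1$. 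The same Neumann-series argument as in Theorem~\ref{th:gershgorin-hard} then applies; since the index set is finite, the supremum in \eqref{eq:qz-condition} is a maximum and is automatically $<1$. Therefore $z\in\varrho(\mathbfcal L_{\LI_k})$. Consequently every eigenvalue $z$ of $\mathbfcal L_{\LI_k}$ satisfies $\beta(z;\mathcal L_{\bm n\bm n})\leqslant R_{\bm n}(\mathbfcal L)+c_0$ for some $\bm n$. For the exact $\mathbfcal L$, Corollary~\ref{cor:resolvent-set} gives the same conclusion with $c_0$ replaced by $0$.

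\emph{Step 3: compactness.} Define
$$G_+(\mathbfcal L):=\Bigl\{z:\Re z\geqslant0,\ \exists\,\bm n\in\mathbb N_0^N,\ \dist\bigl(z,S(-\gamma_{\bm n},\hat H)\bigr)\leqslant\sqrt{C(\Re\gamma_{\bm n}+\gamma)}+c_0\Bigr\}.$$
By Lemma~\ref{lemma:inv-diagonal} and Lemma~\ref{lemma:gershgorin-radii}, Step 2 places $\sigma_+(\mathbfcal L)$ and $\sigma_+(\mathbfcal L_{\LI_k})$ inside this set for $k>k_0$. For $\Re z\geqslant 0$ one has $\dist(z,S(-\gamma_{\bm n},\hat H))\geqslant \Re z+\Re\gamma_{\bm n}\geqslant\Re\gamma_{\bm n}$. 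So membership requires $\Re\gamma_{\bm n}\leqslant\sqrt{C(\Re\gamma_{\bm n}+\gamma)}+c_0$, which bounds $\Re\gamma_{\bm n}$. Because every $\Re\gamma_j>0$, only finitely many multiindices satisfy this bound. Each remaining piece is the closed $r$-neighbourhood of a bounded segment intersected with a closed half-plane, hence compact. A finite union of compact sets is compact, which gives the claim.
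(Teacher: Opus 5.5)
Your proof is correct and follows essentially the same route as the paper. Both arguments uniformly bound the terminator along the exhausting sequence, localise $\sigma(\mathbfcal L_{\LI_k})$ inside enlarged Gershgorin sets of $\mathbfcal L$, and get compactness because the sublinear radii let only finitely many sets meet $\Re z\geqslant 0$. The one difference is cosmetic: you absorb the whole terminator as a single perturbation of size $c_0$, whereas the paper shifts the diagonal blocks and the radii separately and gets $2\Delta$. For that step, estimate via $\|\mathbfcal D_\LI\bm x\|_1\leqslant\|\mathbfcal D_\LI\|_1\|\bm x\|_1$ with $\bm x=(z-\mathbfcal L'_\LII)^{-1}\hat{\bm\rho}$, not via block column sums, because sums of block norms are not in general bounded by the induced norm $\|\mathbfcal D_\LI\|_1$.
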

The proof is based on Theorem~\ref{th:gershgorin-hard} which allows us to locate the eigenvalues of the approximation, bounds given in Lemmas~\ref{lemma:inv-diagonal} and~\ref{lemma:gershgorin-radii}, and bounds on the Liouvillian blocks~\eqref{eq:off-diag-block-bounds}. For clarity, we give the full proof in Appendix~\ref{sec:lemma-unstable-spectrum-proof}. Now, we proceed to another main result of our work. We prove that sufficiently large truncations do not create spurious instabilities.

\begin{definition}
    We call a Liouvillian \(\mathbfcal L\) \emph{stable} if it has no eigenvalues with positive real part. We call \(\mathbfcal L\) \emph{gapped} if \(0\) is its only eigenvalue with vanishing real part, counted with algebraic multiplicity.
\end{definition}

\begin{theorem}[Stability]
    Let the Liouvillian \(\mathbfcal L\) be stable and gapped. Then, for any exhausting sequence of truncations $\{\LI_k\}_{k=1}^{\infty}$ there is $k_0$ such that for all $k > k_0$ the truncated Liouvillians \(\mathbfcal L_{\LI_k}\) are stable and gapped.
    \label{th:stability}
\end{theorem}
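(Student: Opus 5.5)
The plan is to combine Lemma~\ref{lemma:unstable-spectrum} with Theorem~\ref{th:spectrum-convergence}. Lemma~\ref{lemma:unstable-spectrum} supplies a compact set $K := G_+(\mathbfcal L)$ and an index $k_1$. The set $K$ contains $\sigma_+(\mathbfcal L)$, and for $k>k_1$ it also contains $\sigma_+(\mathbfcal L_{\LI_k})$. Because $\mathbfcal L$ is stable and gapped, $\sigma_+(\mathbfcal L)=\{0\}$. By Corollary~\ref{lemma:spectrum-discreteness}, $\sigma_K(\mathbfcal L)$ is finite. Every point of $\sigma_K(\mathbfcal L)$ other than $0$ therefore has strictly negative real part. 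Let $-\delta<0$ be the largest real part among these nonzero points; if there are none, take any $\delta>0$. By the gapped assumption, $0$ has algebraic multiplicity one.

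Next I would choose the neighbourhoods in Theorem~\ref{th:spectrum-convergence}. For $\lambda=0$, take $V_0$ to be an open disc of radius $r<\delta/2$ centred at $0$, small enough that $\overline V_0\cap\sigma(\mathbfcal L)=\{0\}$. For each $\lambda\in\sigma_K(\mathbfcal L)\setminus\{0\}$, take a small disc $V_\lambda$ lying entirely in the half-plane $\Re z<0$; this is possible since $\Re\lambda\leqslant-\delta$. Theorem~\ref{th:spectrum-convergence} then gives $k_2$ such that for $k>k_2$ there are two facts. First, $\mathbfcal L_{\LI_k}$ has no eigenvalues in $K_-$. Second, $V_0$ contains exactly one eigenvalue of $\mathbfcal L_{\LI_k}$, counted with algebraic multiplicity.

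Take $k>\max(k_1,k_2)$ and let $\mu$ be an eigenvalue of $\mathbfcal L_{\LI_k}$ with $\Re\mu\geqslant0$. Then $\mu\in\sigma_+(\mathbfcal L_{\LI_k})\subset K$. Since $\mu\notin K_-$, it must lie in some $V_\lambda$. The discs $V_\lambda$ with $\lambda\ne0$ lie in the open left half-plane, so $\mu\in V_0$. As noted after Definition~\ref{def:truncated-liouvillian}, $\mathbfcal L_{\LI_k}$ always has the eigenvalue $0$, and $0\in V_0$. The count from the previous paragraph then forces $\mu=0$, with algebraic multiplicity one. Hence no eigenvalue of $\mathbfcal L_{\LI_k}$ has positive real part, and $0$ is the only eigenvalue on the imaginary axis, simple. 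So $\mathbfcal L_{\LI_k}$ is both stable and gapped.

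Most of the work is already carried by Lemma~\ref{lemma:unstable-spectrum}, which confines the unstable spectrum uniformly in $k$ to a compact set. The remaining point to check is that the hypotheses of Theorem~\ref{th:spectrum-convergence} are met: each $V_\lambda$ must be bounded, open, have rectifiable boundary, and isolate $\lambda$ from the rest of $\sigma(\mathbfcal L)$. Small discs satisfy all of these, because $\sigma_K(\mathbfcal L)$ is finite and, by Corollary~\ref{lemma:spectrum-discreteness}, $\sigma(\mathbfcal L)$ has no accumulation points in the plane. I would also check that the discs can be taken pairwise disjoint and that $K$ may be enlarged to contain all of them, so that the eigenvalue counts do not overlap.
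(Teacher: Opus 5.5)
Your proof is correct and follows essentially the paper's argument. You confine the non-negative-real-part spectra uniformly in $k$ to the compact set $G_+(\mathbfcal L)$ of Lemma~\ref{lemma:unstable-spectrum}, apply Theorem~\ref{th:spectrum-convergence} with $K=G_+(\mathbfcal L)$, and combine the built-in zero eigenvalue of the truncated Liouvillian with the multiplicity-one count in $V_0$.

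The only difference is your extra discs around nonzero $\lambda\in\sigma_K(\mathbfcal L)$. The paper does not need them, because $G_+(\mathbfcal L)$ is constructed inside the closed right half-plane, so $\sigma_K(\mathbfcal L)=\{0\}$ and $K_-=G_+(\mathbfcal L)\setminus V_0$. Your version simply does not rely on that detail of the construction, and it remains valid.
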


\begin{proof}
    By construction, \(\mathbfcal L_\LI\) has the eigenvalue \(\lambda = 0\).

    Let \(V_0\) be a bounded open neighborhood of \(\lambda = 0\) such that
    $\overline V_0 \cap \sigma(\mathbfcal L) = \{0\}$.
    By Lemma~\ref{lemma:unstable-spectrum}, there exists a compact set \(G_+(\mathbfcal L)\) containing all eigenvalues of \(\mathbfcal L\) and all sufficiently large \(\mathbfcal L_{\LI_k}\) with non-negative real part. Applying Theorem~\ref{th:spectrum-convergence} to the compact set \(G_+(\mathbfcal L)\), we conclude that there exists $k_0$ such that for every $k > k_0$
    \begin{enumerate}
        \item
            the neighborhood \(V_0\) contains exactly one eigenvalue of \(\mathbfcal L_{\LI_k}\), counted with algebraic multiplicity;

        \item
            there are no eigenvalues of \(\mathbfcal L_{\LI_k}\) in
        $G_+(\mathbfcal L) \setminus V_0$.
    \end{enumerate}
    Since all eigenvalues of \(\mathbfcal L_{\LI_k}\) with non-negative real part belong to \(G_+(\mathbfcal L)\), it follows that \(0\) is the only eigenvalue of \(\mathbfcal L_{\LI_k}\) with non-negative real part, for all $k > k_0$. Therefore, all \(\mathbfcal L_{\LI_k}\) are stable and gapped.
\end{proof}

\section{Discussion and example}
\label{sec:discussion}

In the proofs of Theorems~\ref{th:spectrum-convergence} and~\ref{th:stability}, we rely on bounds for the \emph{exact resolvent} \(\mathbfcal R(z; \mathbfcal L)\) which is not directly accessible. In particular, we do not provide any easily verifiable criterion that would determine whether a given truncation $\LI$ is sufficiently deep. If the truncated Liouvillian \(\mathbfcal L_\LI\) has unstable eigenvalues, our results do not determine whether these are spurious truncation artifacts or whether the exact Liouvillian \(\mathbfcal L\) has unstable spectral values. Also, the explicit bounds~\eqref{eq:schur-error-bound} based on Gershgorin's theorem turn out to be very rough. In practice, the eigenvalues may converge much earlier than the presented formalism can even be applied.

Our analysis relies strongly on the symmetric $\sim \sqrt{n}$ scaling of the couplings between the hierarchy levels. HEOM also has seemingly equivalent formulations in which the lowering and raising off-diagonal blocks of the Liouvillian have different growth rates~\cite{Shi2009, Zhu2011, Chen2024}. Equation~\eqref{eq:HEOM0} is formally invariant under the scaling transformation
\begin{equation}
    \hat{\tilde \rho}_{\bm n} := f_{\bm n} \hat \rho_{\bm n},\quad
    \tilde{\mathcal L}_{\bm n \bm m} := \frac{f_{\bm n}}{f_{\bm m}} \mathcal L_{\bm n \bm m},
\end{equation}
where each $f_{\bm n}$ is nonzero. However, the hierarchy $\|\cdot\|_1$ norms before and after this transformation are equivalent if and only if exist $f_\mathrm u > f_\mathrm l > 0$ such that $f_\mathrm u > |f_{\bm n}| > f_\mathrm l$ holds for all $\bm n$. Otherwise, the transformation changes the topology of the ADO hierarchy space, unless one additionally adjusts the norm of the scaled ADOs. The spectrum of an unbounded operator usually depends on the chosen Banach space and the operator domain.

The arguments based on Theorem~\ref{th:gershgorin-hard} remain valid when the off-diagonal blocks $\mathcal L_{\bm n,\bm n\pm\bm e_j}$ grow slower than linearly with the hierarchy level $n_j$, after adjusting the bounds on the Gershgorin radii~\eqref{eq:bound-off-diag} and the Schur-complement approximation~\eqref{eq:schur-error-bound}. In other formulations of HEOM~\cite{Ishizaki2005, Tanimura2006}, the couplings $\mathcal L_{\bm n,\bm n-\bm e_j}$ grow linearly with $n_j$, so the presented reasoning is no longer valid. Even discreteness of the Liouvillian spectrum is no longer guaranteed. Despite the formal equivalence of HEOM variants related by the topology changing similarity transformations, they may have different functional-analytic properties. This, in turn, may affect behavior of the practical numerical schemes.

\begin{table}[t]
    \caption{Poles and residues of the approximation~\eqref{eq:coth} for $T = 0.5$ and $\Lambda = 50$.}
    \begin{center}
        \begin{tabular}{c|c|c|c|c|c|c|c|c|c|c|c}
            \hline\hline
            $j$ & 1 & 2 & 3 & 4 & 5 & 6 & 7 & 8 & 9 & 10 & 11 \\
            \hline
            $\nu_{j}$ &
            341.9 &
            110.8 &
            63.00 &
            41.63 &
            29.38 &
            21.57 &
            16.35 &
            12.63 &
            9.426 &
            6.283 &
            3.145 \\
            \hline
            $r_j$ &
            219.2 &
            24.98 &
            9.358 &
            4.958 &
            3.055 &
            2.009 &
            1.367 &
            1.058 &
            1.002 &
            1.000 &
            1.000 \\
            \hline\hline
        \end{tabular}
    \end{center}
    \label{tab:coth}
\end{table}

\begin{figure}[t]
    \includegraphics[width = \linewidth]{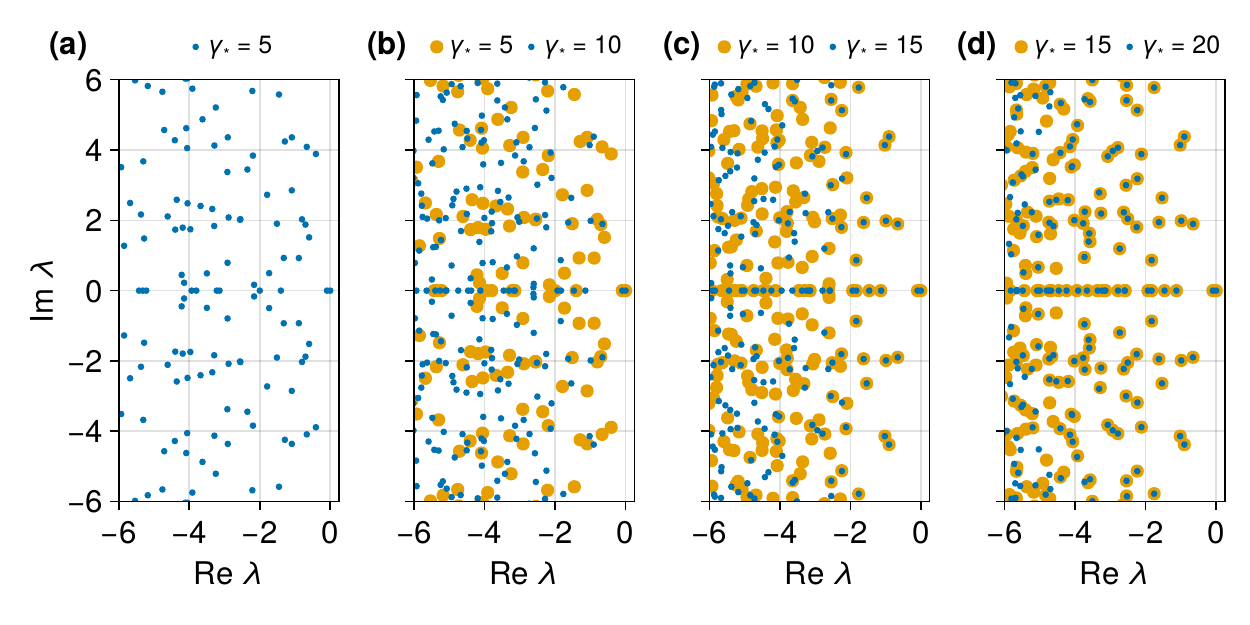}
    \caption{Eigenvalues of truncated Liouvillians $\mathbfcal L_{\LI(\gamma_\ast)}$. Here, $\alpha = 2$, $\omega_0 = 2$, and $\eta = 0.5$. Poles and residues of the $\coth$ approximation are given in Tab.~\ref{tab:coth}.}
    \label{fig:spectrum}
\end{figure}

To complement the general analysis, we consider a simple example which illustrates convergence of the spectral values. We use a paradigmatic spin-boson model coupled to a damped harmonic oscillator. The Hamiltonian and coupling operator are
\begin{equation}
    \hat H = \hat \sigma_x,\qquad \hat q = \hat \sigma_z,
\end{equation}
where \(\hat \sigma_x\) and \(\hat \sigma_z\) are Pauli matrices. We choose the spectral density in the form
\begin{equation}
    J(\omega) = \frac{\alpha \omega \omega_0^4}{\left(\omega^2 - \omega_0^2\right)^2 + 4 \eta^2 \omega^2},
\end{equation}
where \(\alpha\) controls the coupling strength, while \(\omega_0\) and \(\eta\) are the frequency and damping rate of the harmonic oscillator, respectively. Following the approach of Ref.~\cite{Vadimov2025}, we separate the dynamical poles and the fluctuation poles, which originate from the poles of the spectral density and the Bose--Einstein distribution function, respectively. We then construct a rational approximation of
\begin{equation}
    \coth\left(\frac{\omega}{2 T}\right) \approx \operatorname{bose}_T(\omega) := \frac{2 T}{\omega} + \sum\limits_{j = 1}^{N - 2} \frac{2 r_j \omega}{\omega^2 + \nu_{j}^2}
    \label{eq:coth}
\end{equation}
using the symmetrized AAA algorithm~\cite{Nakatsukasa2018} on a sufficiently large interval \([-\Lambda, \Lambda]\). The HEOM decay rates are then given by
\begin{equation}
    \begin{aligned}
        \gamma_1 := \eta + i \sqrt{\omega_0^2 - \eta^2},\quad
        \gamma_2 := \eta - i \sqrt{\omega_0^2 - \eta^2},\quad
        \gamma_j := \nu_{j-2},\quad j = 3, \ldots, N.
    \end{aligned}
\end{equation}
The coupling constants are given by
\begin{equation}
    \begin{gathered}
        c_1 := \frac{i \omega_0^2 \sqrt{\alpha}}{2 \sqrt{2 \eta} \left[\omega_0^2 - \eta^2\right]^{1/4}}, \quad
        c_1' := \frac{c_1}{2}\left[1 + \operatorname{bose}_T(-i \gamma_1)\right], \quad
        c_1'' := \frac{c_1}{2}\left[1 - \operatorname{bose}_T(-i \gamma_1)\right],\\
        c_2 := c_1, \quad 
        c_2' := -\frac{c_1}{2}\left[1 + \operatorname{bose}_T(-i \gamma_2)\right], \quad
        c_2'' := -\frac{c_1}{2}\left[1 - \operatorname{bose}_T(-i \gamma_2)\right],\\
        c_j := i \sqrt{\frac{r_{j-2} \gamma_j \omega_0^4}{2 \left[(\gamma_j^2 + \omega_0^2)^2 - \gamma_j^2 \eta^2\right]}} \quad
        c_j' := -c_j, \quad c_j'' := c_j,\quad j = 3, \ldots, N.
    \end{gathered}
\end{equation}

For the analysis of the truncated HEOM, we parametrize the truncations by the decay rate
\begin{equation}
    \LI(\gamma_\ast) := \left \{\bm n \in \mathbb N_0^N: \Re \gamma_{\bm n} \leqslant \gamma_\ast\right \},
\end{equation}
where $\gamma_\ast$ determines the accuracy of the truncation. For any growing sequence of $\gamma_\ast^{(n)}$, the respective sequence of truncations is exhausting because $\Gamma'_{\LI(\gamma_\ast)} > \gamma_\ast$ and $\Upsilon_{\LI(\gamma_\ast)} = \Gamma'_{\LI(\gamma_\ast)}$.
In Fig.~\ref{fig:spectrum}, we plot the eigenvalues of \(\mathbfcal L_{\LI(\gamma_\ast)}\) for four different values of \(\gamma_\ast\). As \(\gamma_\ast\) increases, the eigenvalues of the Liouvillian with the largest real parts begin to converge. However, one cannot even employ the presented formalism for this example since all $\Gamma_{\LI(\gamma_\ast)}$ turn out to be negative for each $\gamma_\ast = 5, 10, 15$, and $20$.

\section{Conclusions}
\label{sec:conclusions}

In conclusion, we have shown that for finite-dimensional open quantum systems, a suitable truncation of HEOM is free of spectral pollution or spurious instabilities. Our analysis relies on the partial block-diagonal dominance of the HEOM Liouvillian, which yields convenient Gershgorin-type bounds for its resolvent. These bounds, though, turn out quite pessimistic, which limits the practical implications of our results. Still, we believe that this work represents a valuable contribution to the mathematical foundations of computational quantum physics. 

In this work, we do not address the intrinsic stability of the exact HEOM. A recently established equivalence between HEOM and a pseudomode Lindblad master equation~\cite{Mueller2026} implies contractivity in the trace norm of the joint system--pseudomode density operator. However, the topology induced by this norm may differ substantially from that induced by the hierarchy norm $\|\cdot\|_1$ used here, so this result does not directly imply stability in the present setting. A rigorous analysis of HEOM stability, together with extensions to infinite-dimensional and driven open quantum systems, remains an important direction for future work.

\section*{Acknowledgements}
We thank Malte Krug, J\"urgen Stockburger, and Paolo Muratore-Ginanneschi for fruitful discussions.
This research was funded by the Research Council of Finland Centre of Excellence program (project Nos. 352925 and 336810) and grant No. 349594 (THEPOW), and by the European Research Council under Advanced Grant No. 101053801 (ConceptQ).

\appendix
\numberwithin{equation}{section}

\section{Proofs of spectral theorems and lemmas}

\subsection{Proof of Theorem~\ref{th:gershgorin-hard}}
\label{sec:th-gershgorin-hard-proof}
\begin{proof}
We split the Liouvillian into the diagonal and off-diagonal parts with respect to the hierarchy multiindex,
$\mathbfcal L = \mathbfcal L' + \mathbfcal L''$,
where
\begin{equation}
    \mathcal L'_{\bm n\bm m}
    =
    \mathcal L_{\bm n\bm m}\delta_{\bm n\bm m},
    \qquad
    \mathcal L''_{\bm n\bm m}
    =
    \mathcal L_{\bm n\bm m}(1-\delta_{\bm n\bm m}).
\end{equation}
For fixed \(z\), introduce the diagonal operator
\begin{equation}
    \mathbfcal D(z) := z - \mathbfcal L'.
\end{equation}
    Since \(z \in \Omega(\mathbfcal L)\), we have
\begin{equation}
    \beta(z;\mathcal L_{\bm{nn}}) - R_{\bm n}(\mathbfcal L) > 0
    \qquad
    \text{for all }\bm n.
    \label{eq:beta-greater}
\end{equation}
    In particular, \(\beta(z;\mathcal L_{\bm{nn}})>0\), hence \(z \notin \sigma(\mathcal L_{\bm n\bm n})\) for all \(\bm n\), and each block $z-\mathcal L_{\bm n\bm n}$
is invertible. Since these blocks act on a finite-dimensional space, one has
\begin{equation}
    \left\|
    (z-\mathcal L_{\bm n\bm n})^{-1}
    \right\|
    =
    \beta(z; \mathcal L_{\bm{nn}})^{-1}.
    \label{eq:block-inverse-beta}
\end{equation}
Therefore \(\mathbfcal D(z)\) is invertible, with
\begin{equation}
    \left \|\mathbfcal D(z)^{-1} \right \|_1
    =
    \sup_{\bm n}\beta(z; \mathcal L_{\bm{nn}})^{-1}.
\end{equation}

Now consider the operator
\begin{equation}
    \mathbfcal K(z)
    :=
    \mathbfcal L''\,\mathbfcal D(z)^{-1}.
\end{equation}
For arbitrary \(\hat{\bm\rho}\),
\begin{equation}
    \begin{aligned}
        \left \|\mathbfcal K(z)\hat{\bm\rho}\right \|_1
        &=
        \sum_{\bm n}
        \left\|
        \sum_{\bm m\ne \bm n}
        \mathcal L_{\bm n\bm m}
        (z-\mathcal L_{\bm m\bm m})^{-1}
        \hat\rho_{\bm m}
        \right\|
        \\
        &\leqslant
        \sum_{\bm n}\sum_{\bm m\ne \bm n}
        \|\mathcal L_{\bm n\bm m}\|\,
        \left \|(z-\mathcal L_{\bm m\bm m})^{-1}\right \|\,
        \|\hat\rho_{\bm m}\|
        \\
        &=
        \sum_{\bm m}
        \left \|\hat\rho_{\bm m}\right \|
        \sum_{\bm n\ne \bm m}
        \|\mathcal L_{\bm n\bm m}\|\,
        \left \|(z-\mathcal L_{\bm m\bm m})^{-1}\right \|
        \\
        &\leqslant
        \|\hat{\bm\rho}\|_1
        \sup_{\bm m}
        \frac{R_{\bm m}(\mathbfcal L)}
        {\beta(z;\mathcal L_{\bm{mm}})}
        =
        q(z;\mathbfcal L)\left\|\hat{\bm\rho}\right\|_1 < \left\|\hat{\bm\rho}\right\|_1.
    \end{aligned}
\end{equation}
Thus, $\|\mathbfcal K(z)\|_1 \leqslant q(z;\mathbfcal L) < 1$ and
 \(\mathbfcal I-\mathbfcal K(z)\) is invertible by the Neumann series, and
\begin{equation}
    \left \|[\mathbfcal I-\mathbfcal K(z)]^{-1}\right \|_1
    \leqslant
    \frac{1}{1-q(z;\mathbfcal L)}.
\end{equation}
Finally, on the natural domain of \(\mathbfcal L\),
\begin{equation}
    z-\mathbfcal L
    =
    z-\mathbfcal L' - \mathbfcal L''
    =
    \left[\mathbfcal I-\mathbfcal L''\mathbfcal D(z)^{-1}\right]\mathbfcal D(z)
    =
    [\mathbfcal I-\mathbfcal K(z)]\mathbfcal D(z).
\end{equation}
Hence, \(z-\mathbfcal L\) is invertible and
    $\mathbfcal R(z;\mathbfcal L)
    =
    \mathbfcal D(z)^{-1}
    [\mathbfcal I-\mathbfcal K(z)]^{-1}.
    $

It remains to prove the sharper bound \eqref{eq:resolvent-bound-hard-clean}. For any \(\hat{\bm\rho}\) in the domain of \(\mathbfcal L\),
\begin{equation}
    \begin{aligned}
        \left \|(z-\mathbfcal L)\hat{\bm\rho} \right \|_1
        &=
        \sum_{\bm n}
        \left\|
        (z-\mathcal L_{\bm n\bm n})\hat\rho_{\bm n}
        -
        \sum_{\bm m\ne \bm n}
        \mathcal L_{\bm n\bm m}\hat\rho_{\bm m}
        \right\|
        \\
        &\geqslant
        \sum_{\bm n}
        \left \|(z-\mathcal L_{\bm n\bm n})\hat\rho_{\bm n}\right \|
        -
        \sum_{\bm n}
        \left\|
        \sum_{\bm m\ne \bm n}
        \mathcal L_{\bm n\bm m}\hat\rho_{\bm m}
        \right\|
        \\
        &\geqslant
        \sum_{\bm n}
        \beta(z;\mathcal L_{\bm{nn}})\|\hat\rho_{\bm n}\|
        -
        \sum_{\bm n}\sum_{\bm m\ne \bm n}
        \|\mathcal L_{\bm n\bm m}\|\,\|\hat\rho_{\bm m}\|
        \\
        &=
        \sum_{\bm n}
        \left[
            \beta(z;\mathcal L_{\bm{nn}})
            -R_{\bm n}(\mathbfcal L)
        \right]
        \|\hat\rho_{\bm n}\|
        \\
        &\geqslant
        \inf_{\bm n}
        \left[
            \beta(z;\mathcal L_{\bm{nn}})
            -R_{\bm n}(\mathbfcal L)
        \right]
        \|\hat{\bm\rho}\|_1.
    \end{aligned}
\end{equation}
Positivity of the expressions in the second and third lines of the above equation is provided by Eq.~\eqref{eq:beta-greater}.
Therefore
\begin{equation}
    \|\mathbfcal R(z;\mathbfcal L)\|_1
    \leqslant
    \left(
    \inf_{\bm n}
    \left[
        \beta(z;\mathcal L_{\bm{nn}})
        -R_{\bm n}(\mathbfcal L)
    \right]
    \right)^{-1}
    =
    \sup_{\bm n}
    \left[
        \beta(z;\mathcal L_{\bm{nn}})
        -R_{\bm n}(\mathbfcal L)
    \right]^{-1}.
\end{equation}
\end{proof}

\subsection{Proof of Lemma~\ref{lemma:inv-diagonal}}
\label{sec:lemma-inv-diagonal-proof}

\begin{proof}
    Since $\mathcal T$ acts on a finite-dimensional space, its spectrum is given by
    \begin{equation}
        \sigma(\mathcal T) = \left\{
            b + i (\lambda_1 - \lambda_2) : \lambda_1, \lambda_2 \in \sigma\left(\hat A\right)
        \right\} \subset S\left(b, \hat A\right).
    \end{equation}
    Hence, any \(z \notin S\left(b,\hat A\right)\) belongs to the resolvent set $\varrho(\mathcal T)$. For an arbitrary real \(\varphi \in [0,2\pi)\), define
    \begin{equation}
        \hat X_\varphi(t)
        :=
        \exp\left[(\mathcal T - z) e^{i \varphi} t \right] \hat \rho = 
        \exp\left[
            -(z-b)e^{i\varphi} t
        \right]
        \exp\left(
            -i e^{i\varphi} \hat A t
        \right)
        \hat \rho
        \exp\left(
            i e^{i\varphi} \hat A t
        \right).
    \end{equation}
    For every \(\varphi\) such that
    \begin{equation}
        c_\varphi(z)
        :=
        \Re\left[(z-b)e^{i\varphi}\right]
        -
        \Delta(\hat A)\left|\sin\varphi\right|
        >
        0,
        \label{eq:cphi-positive}
    \end{equation}
    the operator \(\hat X_\varphi(t)\) decays to zero as \(t \to +\infty\), and therefore
    \begin{equation}
        (z-\mathcal T)^{-1} e^{-i\varphi} \hat \rho
        =
        \int\limits_0^{+\infty} \hat X_\varphi(t)\,\mathrm d t.
        \label{eq:resolvent-rotated-integral}
    \end{equation}

    Let us estimate the norm of the integrand. Since
    the trace norm is invariant under unitary left and right multiplication, we have
    \begin{equation}
        \begin{aligned}
            \left\|
            \exp\left(
                -i e^{i\varphi} \hat A t
            \right)
            \hat \rho
            \exp\left(
                i e^{i\varphi} \hat A t
            \right)
            \right\|
            &=
            \left\|
            \exp\left(
                \hat A t \sin\varphi
            \right)
            \hat \rho
            \exp\left(
                -\hat A t \sin\varphi
            \right)
            \right\| \\
            &\leqslant
            \left\|
            \exp\left(
                \hat A t \sin\varphi
            \right)
            \right\|
            \left\|
            \exp\left(
                -\hat A t \sin\varphi
            \right)
            \right\|
            \left\|
            \hat \rho
            \right\| \\
            &=
            \exp\left[
            \Delta(\hat A)\left|\sin\varphi\right| t
            \right] \left \| \hat \rho \right \|.
        \end{aligned}
    \end{equation}
    Thus,
    \begin{equation}
        \left\|
        \hat X_\varphi(t)
        \right\|
        \leqslant
        \exp\left[
            -c_\varphi(z)t
        \right]
        \left\|
        \hat \rho
        \right\|.
    \end{equation}
    Combining this with \eqref{eq:resolvent-rotated-integral}, we obtain
    \begin{equation}
        \left\|
        (z-\mathcal T)^{-1}\hat \rho
        \right\|
        \leqslant
        \frac{\left\|
        \hat \rho
        \right\|}{c_\varphi(z)}.
    \end{equation}
    Since this is valid for every \(\varphi\) satisfying \eqref{eq:cphi-positive},
    \begin{equation}
        \left\|
        (z-\mathcal T)^{-1}
        \right\|
        \leqslant
        \left[
            \max_{\varphi \in [0,2\pi)}
            c_\varphi(z)
        \right]^{-1}.
        \label{eq:resolvent-bound-cphi}
    \end{equation}

    It remains to compute the maximum. Let
    $x := \Re(z-b)$ and 
    $y := \Im(z-b)$.
    Then
    \begin{equation}
        c_\varphi(z)
        =
        x\cos\varphi - y\sin\varphi - \Delta(\hat A)\left|\sin\varphi\right|.
    \end{equation}

    If \(|y| \leqslant \Delta(\hat A)\), then
    \begin{equation}
        \begin{aligned}
            c_\varphi(z)
            &\leqslant
            |x|\,|\cos\varphi|
            +
            \left(
                |y|-\Delta(\hat A)
            \right)
            |\sin\varphi|
            \leqslant
            |x|.
        \end{aligned}
    \end{equation}
    Equality is attained at \(\varphi = 0\) if \(x \geqslant 0\) and at \(\varphi=\pi\) if \(x<0\). Hence
    \begin{equation}
        \max_{\varphi} c_\varphi(z)
        =
        |x|
        =
        \operatorname{dist}\left(
            z,
            S(b,\hat A)
        \right).
    \end{equation}

    If \(y > \Delta(\hat A)\), let \(c := y-\Delta(\hat A) > 0\). Then for every \(\varphi\),
    \begin{equation}
        c_\varphi(z)
        \leqslant
        x\cos\varphi - c\sin\varphi.
    \end{equation}
    Indeed, for \(\sin\varphi < 0\) this is an equality, while for \(\sin\varphi \geqslant 0\) we have
    \begin{equation}
        -y\sin\varphi - \Delta(\hat A)\left|\sin\varphi\right|
        =
        -(y+\Delta(\hat A))\sin\varphi
        \leqslant
        -c\sin\varphi.
    \end{equation}
    Therefore,
    \begin{equation}
        c_\varphi(z)
        \leqslant
        \sqrt{x^2 + c^2}.
    \end{equation}
    Equality is attained for
    \begin{equation}
        \cos\varphi = \frac{x}{\sqrt{x^2+c^2}},
        \qquad
        \sin\varphi = -\frac{c}{\sqrt{x^2+c^2}},
    \end{equation}
    so
    \begin{equation}
        \max_{\varphi} c_\varphi(z)
        =
        \sqrt{x^2 + \left(y-\Delta(\hat A)\right)^2}
        =
        \left|
            z - b - i\Delta(\hat A)
        \right|
        =
        \operatorname{dist}\left(
            z,
            S(b,\hat A)
        \right).
    \end{equation}

    If \(y < -\Delta(\hat A)\), the argument is analogous with
    \begin{equation}
        c := -y-\Delta(\hat A) > 0,
    \end{equation}
    and yields
    \begin{equation}
        \max_{\varphi} c_\varphi(z)
        =
        \sqrt{x^2 + \left(y+\Delta(\hat A)\right)^2}
        =
        \left|
            z - b + i\Delta(\hat A)
        \right|
        =
        \operatorname{dist}\left(
            z,
            S(b,\hat A)
        \right).
    \end{equation}

    Substituting this into \eqref{eq:resolvent-bound-cphi}, we obtain
    \begin{equation}
        \left\|
        (z-\mathcal T)^{-1}
        \right\|
        \leqslant
        \frac{1}{
            \operatorname{dist}\left(
                z,
                S(b,\hat A)
            \right)
        }.
    \end{equation}
\end{proof}

\subsection{Proof of Lemma~\ref{lemma:gershgorin-radii}}
\label{sec:lemma-gershgorin-radii-proof}
\begin{proof}
    For the off-diagonal blocks of the HEOM Liouvillian, we have
    \begin{equation}
        \begin{aligned}
            \left \| \mathcal L_{\bm n,\, \bm n + \bm e_j} \right \|
            \leqslant
            2 |c_j| \sqrt{n_j + 1}\left \| \hat q_j \right \|,~ 
            \left \| \mathcal L_{\bm n,\, \bm n - \bm e_j} \right \|
            \leqslant
            \sqrt{n_j}\left \| \hat q_j \right \|
            \left(
                |c_j'| + |c_j''|
            \right).
        \end{aligned}
    \end{equation}
    Therefore,
    \begin{equation}
        \begin{aligned}
            R_{\bm n}(\mathbfcal L)
            &\leqslant
            \sum\limits_{j=1}^N
            \left \| \hat q_j \right \|
            \left[
                2 |c_j| \sqrt{n_j}
                +
                \left(
                    |c_j'| + |c_j''|
                \right)\sqrt{n_j + 1}
            \right] \\
            &\leqslant
            \sum\limits_{j=1}^N
            \sqrt{n_j + 1}
            \left \| \hat q_j \right \|
            \left(
                2 |c_j| + |c_j'| + |c_j''|
            \right).
        \end{aligned}
    \end{equation}

    Denote $a_j
        :=
        \left \| \hat q_j \right \|
        \left(
            2 |c_j| + |c_j'| + |c_j''|
        \right)
    $ and $r_j := \Re \gamma_j > 0$. Then
    \begin{equation}
        R_{\bm n}(\mathbfcal L)
        \leqslant
        \sum\limits_{j=1}^N a_j \sqrt{n_j + 1}.
    \end{equation}

    We now maximize the right-hand side under the constraint
    \begin{equation}
        \sum\limits_{j=1}^N n_j r_j = \Gamma,
    \end{equation}
    where \(\Gamma = \Re \gamma_{\bm n}\). Since the function
    \(
        \sum_{j=1}^N a_j \sqrt{n_j+1}
    \)
    is concave in \((n_1,\dots,n_N)\), the maximum over \(n_j \geqslant -1\) is attained at a critical point of the corresponding Lagrange function
    \begin{equation}
        F(\bm n,\lambda)
        :=
        \sum\limits_{j=1}^N a_j \sqrt{n_j + 1}
        -
        \lambda
        \left(
            \sum\limits_{j=1}^N n_j r_j - \Gamma
        \right).
    \end{equation}
    The extremum conditions \(\partial F/\partial n_j = 0\) give
    \begin{equation}
        \frac{a_j}{2\sqrt{n_j+1}} = \lambda r_j,
    \end{equation}
    hence
    \begin{equation}
        n_j
        =
        \left(
            \frac{a_j}{2\lambda r_j}
        \right)^2 - 1.
    \end{equation}
    Substituting this into the constraint, we obtain
    \begin{equation}
        \begin{aligned}
            \Gamma
            =
            \sum\limits_{j=1}^N r_j
            \left[
                \left(
                    \frac{a_j}{2\lambda r_j}
                \right)^2 - 1
            \right]
            =
            \frac{1}{4\lambda^2}
            \sum\limits_{j=1}^N \frac{a_j^2}{r_j}
            -
            \sum\limits_{j=1}^N r_j.
        \end{aligned}
    \end{equation}
    Hence,
    \begin{equation}
        \lambda
        =
        \frac{1}{2}
        \sqrt{
            \frac{
                \sum\limits_{j=1}^N a_j^2/r_j
            }{
                \Gamma + \sum\limits_{j=1}^N r_j
            }
        }=
        \frac{1}{2}
        \sqrt{
            \frac{C}{\Gamma + \gamma}
        }.
    \end{equation}
    The last equality holds according to the definitions of \(a_j\), \(r_j\), \(C\), and \(\gamma\).
    Finally,
    \begin{equation}
        \begin{aligned}
            R_{\bm n}(\mathbfcal L)
            \leqslant
            \sum\limits_{j=1}^N a_j \sqrt{n_j+1}
            =
            \sum\limits_{j=1}^N
            \frac{a_j^2}{2\lambda r_j}
            =
            \frac{1}{2\lambda}
            \sum\limits_{j=1}^N \frac{a_j^2}{r_j}
            =
            \sqrt{
                C(\Gamma+\gamma)
            }=
            \sqrt{
                C(\Re \gamma_{\bm n} + \gamma)
            }.
        \end{aligned}
    \end{equation}
\end{proof}

\subsection{Proof of Corollary~\ref{cor:resolvent-set}}
\label{sec:corr-resolvent-set}
\begin{proof}
    For each $z \in \Omega(\mathbfcal L)$ and each $\bm n$
    \begin{equation}
        \frac{R_{\bm n}(\mathbfcal L)}{\beta(z; \mathcal L_{\bm{nn}})} < 1.
    \end{equation}
    On the other hand, due to Lemmas~\ref{lemma:inv-diagonal} and~\ref{lemma:gershgorin-radii}
    \begin{equation}
        \frac{R_{\bm n}(\mathbfcal L)}{\beta(z; \mathcal L_{\bm{nn}})} \leqslant 
        \frac{R_{\bm n}(\mathbfcal L)}{\operatorname{dist}\left(z, S\left(-\gamma_{\bm n}; \hat H\right)\right)} \leqslant
        \frac{\sqrt{C(\Re \gamma_{\bm n} + \gamma)}}{|\Re (z + \gamma_{\bm n})|}.
    \end{equation}

    If $\Re z > \gamma + C$, then for all $\bm n$
    \begin{equation}
        \frac{R_{\bm n}(\mathbfcal L)}{\beta(z; \mathcal L_{\bm{nn}})} < \frac{1}{2},
        \label{eq:one-half-bound}
    \end{equation}
    which results in $q(z; \mathbfcal L) \leqslant 1/2$.

    Otherwise, bound~\eqref{eq:one-half-bound} holds for all $\bm n$ such that
    \begin{equation}
        \Re \gamma_{\bm n} > 2C - \Re z + 2 \sqrt{C(C + \gamma - \Re z)}.
    \end{equation}
    Since $\Re \gamma_j > 0$ for each $j = 1,\ldots, N$, there are only finitely many multiindices $\bm n$ with bounded $\Re \gamma_{\bm n}$. Hence, only finitely many ratios $R_{\bm n}(\mathbfcal L) / \beta(z; \mathcal L_{\bm{nn}})$ can be greater than $1/2$. Each of them is still strictly less than $1$, therefore $q(z; \mathbfcal L) < 1$.

    Thus, for any $z \in \Omega(\mathbfcal L)$ condition~\eqref{eq:qz-condition} is satisfied, which proofs the claim.
\end{proof}

\subsection{Proof of Lemma~\ref{lemma:decay-ratio}}
\label{sec:lemma-decay-ratio-proof}
\begin{proof}
    Let $\left\{\LI_k\right\}_{k=1}^\infty$ be an exhausting sequence of truncations. We pick an arbitrary $\gamma_\ast$ and construct a set 
    \begin{equation}
        \mathbb I(\gamma_\ast) := \{\bm n \in \mathbb N_0^N : \Re \gamma_{\bm n} \leq \gamma_\ast\}.
    \end{equation}
    Since $\Re \gamma_j > 0$ for each $j = 1, \ldots, N$, $\mathbb I(\gamma_\ast)$ is finite. Then, there is $k_0$ such that $\mathbb I(\gamma_\ast) \cap \LJ_k = \varnothing$ for all $k > k_0$, hence $\Gamma'_{\LI_k} > \gamma_\ast$ and $\Gamma'_{\LI_k} \to +\infty$.

    By the definition of the decay rates and since \(R_{\bm n}(\mathbfcal L) \geqslant 0\), one has
    $\Gamma_{\LI} \leqslant \Gamma'_{\LI}$ for any truncation $\LI$. On the other hand, by Lemma~\ref{lemma:gershgorin-radii},
    \begin{equation}
        R_{\bm n}(\mathbfcal L)
        \leqslant
        \sqrt{C(\Re \gamma_{\bm n}+\gamma)},
    \end{equation}
    and the function $x-\sqrt{C(x+\gamma)}$ is increasing for all sufficiently large $x$. Therefore, for all sufficiently large $k$,
    \begin{equation}
        1
        -
        \frac{\sqrt{C(\Gamma'_{\LI_k}+\gamma)}}{\Gamma'_{\LI_k}}
        \leqslant
        \frac{\Gamma_{\LI_k}}{\Gamma'_{\LI_k}}
        \leqslant
        1.
    \end{equation}
    For the sequence of truncations $\{\LI_k\}_{k=1}^\infty$, as $k$ goes to $\infty$, both left-hand and right-hand sides tend to \(1\). By the squeeze theorem,
    ${\Gamma_{\LI_k}}/{\Gamma'_{\LI_k}} \to 1$. In particular, $\Gamma_{\LI_k} \to +\infty$.
\end{proof}

\section{Proofs of theorems and lemmas for truncated Liouvillian}
\subsection{Proof of Lemma~\ref{lemma:schur-convergence}}
\label{sec:lemma-schur-convergence-proof}
\begin{proof}
    First, fix a truncation $\LI$ such that $\Gamma_\LI > 0$ and a point $z$ such that $\Re z>-\Gamma_\LI$. Proposition~\ref{prop:simplest-bound} then applies to $\mathbfcal L_\LJJ$ at both $z$ and $0$.

    By the definitions of \(\mathbfcal S_\LI(z)\) and \(\mathbfcal L_\LI\),
    \begin{equation}
        \mathbfcal S_\LI(z) - z + \mathbfcal L_\LI
        =
        -\mathbfcal L_\LIJ
        \left[
            \left(z - \mathbfcal L_\LJJ\right)^{-1}
            +
            \left(\mathbfcal L'_\LJJ\right)^{-1}
        \right]
        \mathbfcal L_\LJI.
        \label{eq:schur-minus-trunc}
    \end{equation}
    We rewrite the bracket as
    \begin{equation}
        \begin{aligned}
            \left(z - \mathbfcal L_\LJJ\right)^{-1}
            +
            \left(\mathbfcal L'_\LJJ\right)^{-1}
            &=
            \left[
                \left(z - \mathbfcal L_\LJJ\right)^{-1}
                +
                \left(\mathbfcal L_\LJJ\right)^{-1}
            \right]
            +
            \left[
                \left(\mathbfcal L'_\LJJ\right)^{-1}
                -
                \left(\mathbfcal L_\LJJ\right)^{-1}
            \right] \\
            &=
            z\left(z - \mathbfcal L_\LJJ\right)^{-1}\left(\mathbfcal L_\LJJ\right)^{-1}
            -
            \left(\mathbfcal L_\LJJ\right)^{-1}
            \left(
                \mathbfcal L_\LJJ - \mathbfcal L'_\LJJ
            \right)
            \left(\mathbfcal L'_\LJJ\right)^{-1}.
        \end{aligned}
    \end{equation}
    Hence
    \begin{equation}
        \begin{aligned}
            \left \|
            \mathbfcal S_\LI(z) - z + \mathbfcal L_\LI
            \right \|_1
            \leqslant
            \left \| \mathbfcal L_\LIJ \right \|_1
            \left \| \left(\mathbfcal L_\LJJ\right)^{-1} \right \|_1
            \left \| \mathbfcal L_\LJI \right \|_1 &\left[
                |z| 
                \left \|
                    \left(z - \mathbfcal L_\LJJ\right)^{-1}
                \right \|_1 \right . \\ &+ \left. 
                \left \|
                    \left(
                        \mathbfcal L_\LJJ - \mathbfcal L'_\LJJ
                    \right)
                    \left(\mathbfcal L'_\LJJ\right)^{-1}
                \right \|_1
            \right].
        \end{aligned}
        \label{eq:schur-convergence-bound}
    \end{equation}

    We now bound each factor. By Proposition~\ref{prop:simplest-bound},
    \begin{equation}
        \left \|
        \left(z - \mathbfcal L_\LJJ\right)^{-1}
        \right \|_1
        \leqslant
        \frac{1}{\Re z+\Gamma_\LI},
        \qquad
        \left \|
        \left(\mathbfcal L_\LJJ\right)^{-1}
        \right \|_1
        \leqslant
        \frac{1}{\Gamma_\LI},
        \qquad
        \left \|
        \left(\mathbfcal L'_\LJJ\right)^{-1}
        \right \|_1
        \leqslant
        \frac{1}{\Gamma'_\LI}.
    \end{equation}
    Moreover, from the bounds~\eqref{eq:off-diag-block-bounds},
    \begin{equation}
        \left \| \mathbfcal L_\LJI \right \|_1
        \leqslant
        \sqrt{C\left(\Upsilon_\LI + \gamma\right)},
        \qquad
        \left \| \mathbfcal L_\LIJ \right \|_1
        \leqslant
        \sqrt{C\left(\Upsilon_\LI + \tilde \gamma\right)}.
    \end{equation}
    Finally,
    \begin{equation}
        \left \|
            \left(
                \mathbfcal L_\LJJ - \mathbfcal L'_\LJJ
            \right)
            \left(\mathbfcal L'_\LJJ\right)^{-1}
        \right \|_1
        \leqslant
        \sup_{\bm n \in \LJ}
        \frac{R_{\bm n}(\mathbfcal L)}{\Re \gamma_{\bm n}} \leqslant
        \frac{\sqrt{C\left(\Gamma'_\LI + \gamma\right)}}{\Gamma'_\LI} =
        O\left(
            (\Gamma'_\LI)^{-1/2}
        \right),
    \end{equation}
    where the last inequality holds due to Lemma~\ref{lemma:gershgorin-radii}.

    Substituting all these bounds into~\eqref{eq:schur-convergence-bound} gives
    \begin{equation}
        \left \|
        \mathbfcal S_\LI(z) - z + \mathbfcal L_\LI
        \right \|_1
        \leqslant
        \varepsilon_\LI(z),
    \end{equation}
    where $\varepsilon_\LI(z)$ is defined in~\eqref{eq:schur-error-majorant}.

    Now let $\Omega\subset\mathbb C$ be bounded and $\{\LI_k\}_{k=1}^\infty$ be exhausting, and set $z_*:=\sup_{z\in\Omega}|z|$. By Lemma~\ref{lemma:decay-ratio}, $\Gamma_{\LI_k}\to+\infty$, so the explicit bound applies throughout $\Omega$ for all sufficiently large $k$. Using $|z|\leqslant z_*$ and $\Re z+\Gamma_\LI\geqslant\Gamma_\LI-z_*$, we obtain
    \begin{equation}
        \sup_{z\in\Omega}\varepsilon_\LI(z)
        \leqslant
        \frac{
            C\sqrt{(\Upsilon_\LI+\gamma)(\Upsilon_\LI+\tilde\gamma)}
        }{\Gamma_\LI}
        \left[
            \frac{z_*}{\Gamma_\LI-z_*}
            +
            \frac{\sqrt{C(\Gamma_\LI'+\gamma)}}{\Gamma_\LI'}
        \right].
    \end{equation}
    By the definition of an exhausting sequence, $\Upsilon_\LI/\Gamma'_\LI$ is bounded for all sufficiently large truncations, while Lemma~\ref{lemma:decay-ratio} gives $\Gamma_\LI/\Gamma'_\LI\to1$. The right-hand side therefore converges to zero, which proves the claim.
\end{proof}

\subsection{Proof of Lemma~\ref{lemma:unstable-spectrum}}
\label{sec:lemma-unstable-spectrum-proof}
\begin{proof}
    Let $\{\LI_k\}_{k=1}^\infty$ be an exhausting sequence. By its definition there exist $\Lambda$ and $k_0$ such that ${\Upsilon_{\LI_k}} / {\Gamma'_{\LI_k}} \leqslant \Lambda + 1$ for each $k > k_0$.
    Let
    \begin{equation}
        \gamma_{\min}
        :=
        \min_{1\leqslant j\leqslant N}\Re \gamma_j.
    \end{equation}
    By definition, \(\Gamma'_\LI \geqslant \gamma_{\min}\) for every truncation \(\LI\).
    In the rest of the proof, fix $k>k_0$ and write $\LI=\LI_k$ for brevity.

    We first bound the difference between the block \(\mathbfcal L_\LII\) and the truncated Liouvillian \(\mathbfcal L_\LI\):
    \begin{equation}
        \left \|
        \mathbfcal L_\LII - 
        \mathbfcal L_\LI
        \right \|_1 = \left \|
        \mathbfcal L_\LIJ
        \left(\mathbfcal L_\LJJ'\right)^{-1}
        \mathbfcal L_\LJI
        \right\|_1 \leqslant
        \frac{C \left(\Upsilon_\LI + \tilde \gamma\right)}{\Gamma_\LI'}  \leqslant
        \Delta := C\left(\Lambda + 1 + \frac{\tilde \gamma}{\gamma_{\min}}\right).
    \end{equation}
    In particular, for every \(\bm n \in \LI\),
    \begin{equation}
        \left\|
        \mathcal L_{\bm n\bm n}
        -
        \left(\mathcal L_\LI\right)_{\bm n\bm n}
        \right\|
        \leqslant
        \Delta.
        \label{eq:diag-diff-delta}
    \end{equation}

    We now bound the Gershgorin radii of \(\mathbfcal L_\LI\). For \(\bm n \in \LI\),
    \begin{equation}
        \begin{aligned}
            R_{\bm n} \left(\mathbfcal L_\LI\right) &\leqslant
            R_{\bm n} \left(\mathbfcal L_\LII\right) + \sum\limits_{\bm m \in \LI \setminus \{\bm n\}} \left \|
            \sum\limits_{\bm k \in \LJ}
            \mathcal L_{\bm{mk}}
            \mathcal L_{\bm{kk}}^{-1}
            \mathcal L_{\bm{kn}}
            \right \| \\
            &\leqslant
            R_{\bm n} \left(\mathbfcal L\right) +
            \sum\limits_{\bm k \in \LJ}
            \left \| \mathcal L_{\bm{kk}}^{-1} \right \|
            \left \| \mathcal L_{\bm{kn}} \right \|
            \sum\limits_{\bm m \in \LI} 
            \left \| \mathcal L_{\bm{mk}} \right \| \\
            &\leqslant
            R_{\bm n} \left(\mathbfcal L\right) + 
            \sum\limits_{\bm k \in \LJ}  
            \left \| \mathcal L_{\bm{k n}} \right\|
            \frac{\sqrt{C(\Re \gamma_{\bm k} + \gamma)}}{\Re \gamma_{\bm k}} \\
            &\leqslant
            R_{\bm n} \left(\mathbfcal L\right) + 
            \sqrt{C(\Upsilon_\LI + \gamma)}
            \frac{\sqrt{C(\Gamma'_\LI + \gamma)}}{\Gamma'_\LI} \\
            &\leqslant
            R_{\bm n} \left(\mathbfcal L\right) + 
            \frac{C(\Upsilon_\LI + \tilde\gamma)}{\Gamma'_\LI}
            \leqslant R_{\bm n}(\mathbfcal L) + \Delta.
        \end{aligned}
    \end{equation}

    Since \(\mathbfcal L_\LI\) acts on a finite-dimensional space, Theorem~\ref{th:gershgorin-hard} yields
    \begin{equation}
        \sigma(\mathbfcal L_\LI) \subset \bigcup_{\bm n \in \LI}
        \left \{
            z \in \mathbb C: 
            \beta\left(
                z; \left(\mathcal L_\LI\right)_{\bm{nn}}
            \right) \leqslant R_{\bm n}(\mathbfcal L) + \Delta
        \right \}.
    \end{equation}
    For each $z \in \mathbb C$ and $\bm n \in \LI$, we have
    \begin{equation}
        \beta\left(z; \left(\mathcal L_\LI\right)_{\bm{nn}}\right) = 
        \inf_{\| \hat \rho \| = 1} \left\|
            \left[
                z - \mathcal L_{\bm{nn}}
                - \left(\mathcal L_\LI\right)_{\bm{nn}}
                +  \mathcal L_{\bm{nn}}
            \right]\hat \rho
        \right\| \geqslant 
        \beta \left(z; \mathcal L_{\bm{nn}}\right) - \Delta.
    \end{equation}
    Therefore,
    \begin{equation}
        \sigma(\mathbfcal L_\LI) \subset \widetilde{G}(\mathbfcal L) :=
        \bigcup_{\bm n}\widetilde{G}_{\bm n}(\mathbfcal L),
    \end{equation}
    where
    \begin{equation}
        \widetilde{G}_{\bm n}(\mathbfcal L) :=
        \left \{
            z \in \mathbb C: 
            \beta(z; \mathcal L_{\bm{nn}})
            \leqslant R_{\bm n}(\mathbfcal L) + 2 \Delta
        \right \}.
    \end{equation}
    Here we also used Lemma~\ref{lemma:inv-diagonal}. The same inclusion holds for the full Liouvillian,
    $\sigma(\mathbfcal L) \in \widetilde G(\mathbfcal L)$,
    because \(R_{\bm n}(\mathbfcal L) \leqslant R_{\bm n}(\mathbfcal L)+2 \Delta\).

    We now define
    \begin{equation}
        G_+(\mathbfcal L) := \widetilde G(\mathbfcal L) \cap \left \{
            z \in \mathbb C : \Re z \geqslant 0
        \right\}.
    \end{equation}
    Then,
    $\sigma_+(\mathbfcal L) \subset G_+(\mathbfcal L)$
    and
    $\sigma_+(\mathbfcal L_{\LI_k}) \subset G_+(\mathbfcal L)$ for every $k>k_0$.
    Finally, \(G_+(\mathbfcal L)\) is compact. Indeed,
    \begin{equation}
        R_{\bm n}(\mathbfcal L) + 2\Delta
        \leqslant
        \sqrt{C(\Re \gamma_{\bm n}+\gamma)} + 2\Delta
        =
        o(\Re \gamma_{\bm n})
    \end{equation}
    as \(\Re \gamma_{\bm n}\to+\infty\). Hence, only finitely many sets \(\widetilde G_{\bm n}(\mathbfcal L)\) intersect the closed right half-plane \(\Re z \geqslant 0\). Therefore, \(G_+(\mathbfcal L)\) is a finite union of compact sets, and is itself compact.
\end{proof}

\bibliography{main}

\end{document}